\newtheorem{theorem}{Theorem}[section]
\newtheorem{prop}[theorem]{Proposition}
\newtheorem{algm}[theorem]{Algorithm}
\newtheorem{lem}[theorem]{Lemma}
\newtheorem{coro}[theorem]{Corollary}
\newtheorem{thm}[theorem]{Theorem}
\newtheorem{rem}[theorem]{\rm\textsc{Remark}}
\newtheorem{exam}[theorem]{\rm\textsc{Example}}
\newcommand{\ideal}[1]{\ensuremath{\left\langle #1 \right\rangle}}
\newcommand{\cod}[1]{\ensuremath{\left| #1 \right\rangle}}
\newcommand{\oeq}[1]{\ensuremath{\overset{(\ref{#1})}{=}}}
\newcommand{\bslash}{\kern-0.1em\texttt{\scalebox{0.6}[1]{/}}\kern-0.15em \texttt{\scalebox{0.6}[1]{/}}}
\DeclareMathOperator{\GL}{GL}
\DeclareMathOperator{\dia}{diag}
\DeclareMathOperator{\wt}{wt}
\DeclareMathOperator{\swt}{swt}
\DeclareMathOperator{\Tr}{Tr}
\DeclareMathOperator{\tr}{tr}
\newcommand{\A}{\mathcal{A}} 
\newcommand{\B}{\mathcal{B}} 
\newcommand{\CC}{\mathcal{C}} 
\newcommand{\C}{\mathbb{C}} 
\newcommand{\Z}{\mathbb{Z}} 
\newcommand{\N}{\mathbb{N}} 
\newcommand{\F}{\mathbb{F}}
\newcommand{\ta}{\textbf{a}} 
\newcommand{\tb}{\textbf{b}} 
\newcommand{\tx}{\textbf{x}} 
\newcommand{\ra}{\longrightarrow}
\newcommand{\hbo}{$\hfill\Diamond$}
\begin{document}
\title{An invariant-theoretic approach to three weight enumerators\\ of self-dual quantum codes} 
\def\shorttitle{An invariant-theoretic approach to weight enumerators of self-dual quantum codes}

\author{Yin Chen}
\address{School of Mathematics and Physics, Key Laboratory of ECOFM of 
Jiangxi Education Institute, Jinggangshan University,
Ji'an 343009, Jiangxi, China \& Department of Finance and Management Science, University of Saskatchewan, Saskatoon, SK, Canada, S7N 5A7}
\email{yin.chen@usask.ca}

\author{Shan Ren}
\address{School of Mathematics and Statistics, Changchun University, Changchun 130022, China}
\email{rens734@nenu.edu.cn}

\author{Runxuan Zhang}
\address{Department of Mathematics and Information Technology, Concordia University of Edmonton, Edmonton, AB, Canada, T5B 4E4}
\email{runxuan.zhang@concordia.ab.ca}

\begin{abstract}
This article is a continuation of our recent work \cite{CZ24} in the setting of quantum error-correcting codes.
We use algebraic invariant theory to study three weight enumerators of formally self-dual quantum codes over arbitrary finite fields.  We derive a quantum analogue of Gleason's theorem, demonstrating that the weight enumerator of a formally self-dual quantum code can be expressed algebraically by two polynomials. We also show that the double weight enumerator of a formally self-dual quantum code can be expressed algebraically by five polynomials. We explicitly compute the complete weight enumerators of some special self-dual quantum codes. Our approach  illustrates the potential of employing algebraic  invariant theory to compute weight enumerators of self-dual quantum codes. 
\end{abstract}

\date{\today}
\thanks{2020 \emph{Mathematics Subject Classification}. 13A50; 94B50.}
\keywords{Weight enumerators; invariant theory; MacWilliams identities.}
\maketitle \baselineskip=17pt

\dottedcontents{section}[1.16cm]{}{1.8em}{5pt}
\dottedcontents{subsection}[2.00cm]{}{2.7em}{5pt}

\section{Introduction}
\setcounter{equation}{0}
\renewcommand{\theequation}
{1.\arabic{equation}}
\setcounter{theorem}{0}
\renewcommand{\thetheorem}
{1.\arabic{theorem}}

\noindent Quantum error-correcting codes and quantum error correction stem from \cite{SH95,Ste96a,Ste96b,KL97} and
have been studied extensively in the past 30 years because of the significant role  they play in analyzing physical principles, protecting information-carrying quantum states against decoherence, and making fault-tolerant quantum computation possible. It is well known that the weight distribution of a quantum code is fundamental in determining bounds on its minimum distance and hence its error-detecting capability, and that the computation of weight enumerators is indispensable for a thorough understanding of the structural and performance properties of quantum codes; see for example, \cite{AL99, AK01}, and \cite{KKKS06}. 
As a key component in computing weight enumerators, several MacWilliams-type identities for quantum codes have been obtained in \cite{SL97, Rai00, HYY19} and \cite{HYY20}. The primary objective of this article is to use algebraic invariant theory and MacWilliams identities to compute three kinds of weight enumerators of formally self-dual quantum codes.

As a classical topic in modern algebra, algebraic invariant theory begins with a faithful representation of a group and aims to 
study the subring of all polynomials fixed under the action of the group. Invariant theory of finite groups is an important tool for computing  weight (or shape) enumerators of self-dual codes in the classical coding theory; see \cite{Slo77, SA20} and \cite{NRS06} for a general reference of self-dual codes and invariant theory. The classical MacWilliams identity states 
that the weight enumerator $W_{\CC^\bot}(x,y)$ of the dual code $\CC^\bot$ of a code $\CC$ can be written as the image of the weight enumerator $W_\CC(x,y)$ of $\CC$ under the linear action of a finite group $G$, which means that
the weight enumerator $W_\CC(x,y)$ of a self-dual code $\CC$ can be viewed as a polynomial invariant under
the action of $G$. Together with MacWilliams identities, algebraic invariant theory has derived many substantial ramifications in computing 
weight enumerators of classical self-dual codes; see for example, \cite{Gle71, SA20} and \cite{CZ24}.

The notions of (Shor-Laflamme) weight distributions and weight enumerators for quantum codes were introduced by \cite{SL97}, which also derives a quantum MacWilliams identity and inspires numerous subsequent research work. For example, \cite{LHL16} derived a MacWilliams identity for quantum convolutional codes, and \cite{HESG18} used 
the quantum MacWilliams identity obtain some new bounds on the existence of absolutely maximally entangled quantum states in dimension larger than two.
Recently, the concept of Shor-Laflamme weight distributions was generalized to double weight distributions and complete weight distributions in \cite{HYY19}, and two MacWilliams identities about the double and complete weight enumerators have been developed for all finite fields $\F_q$; see \cite[Theorem 5]{HYY20}. In particular, they demonstrated how to use MacWilliams identities to determine the Singleton-type and Hamming-type bounds for arbitrary asymmetric quantum codes; see \cite[Theorems 1 and 2]{HYY20}. Since then, these three kinds of weight enumerators have attracted significant attention within the quantum coding theory community; see \cite{CS25} and \cite{KL25}. 

In this article, we take a viewpoint of algebraic invariant theory to explore the Shor-Laflamme weight enumerators, 
double weight enumerators, and complete weight enumerators for formally self-dual quantum codes. We say that a quantum code $Q$ is \textit{formally self-dual} if the two complete weight enumerators $D(M)$ and $D^\bot(M)$ are equal up to a nonzero scalar; see (\ref{FSD}) below. The Bell state code $Q_B$ is an example of formally self-dual codes; see Example \ref{exam2.3}. Note that some interesting examples of self-dual quantum codes respect to the Shor-Laflamme weight enumerators have appeared in the theory of quantum codes; see \cite[Example 7.2]{BCH23} and \cite[Example 3.1]{HYY20}.  We will see that the MacWilliams identities imply that the (Shor-Laflamme, double, or complete) weight enumerators of a formally self-dual quantum code can be regarded as invariant polynomials under certain group actions $(G,V)$.
This means that describing the general shape of these three weight enumerators is equivalent to 
computing the corresponding invariant rings $\C[V]^G$. Moreover, finding a homogeneous generating set for a given
invariant ring could be extremely challenging but it is the core task in algebraic invariant theory; see for example \cite{DK15} or \cite{CW11}. Many classical techniques from invariant theory, such as Molien's formula and Noether's bound theorem, play a significant role in computing the weight enumerators of classical self-dual codes; see \cite{Slo77} and \cite{SA20}. 

The present article can be viewed as a continuation of our recent work \cite{CZ24} to the theory of quantum codes. To our knowledge, this is the first application of algebraic invariant theory to characterize the structures of the  (Shor-Laflamme) weight enumerators, the double weight enumerators, and the complete weight enumerators of formally self-dual quantum codes. Roughly speaking, the first step of our approach is 
to determine the corresponding group $G$ and the representation $V$ via the MacWilliams identity; 
the second step is to find another representation $W$ of $G$ that is equivalent to $V$ but makes 
$\C[W]^G$ easier to compute; in the third step, we may use some suitable invertible matrix $T$
to transfer a homogeneous generating set $\A$ of $\C[W]^G$ to a homogeneous 
generating set $\B$ of $\C[V]^G$. In the case of classical self-dual linear codes, our method 
has successfully described the algebraic structures of the shape enumerators of self-dual NRT linear codes over any finite fields; see \cite[Section 4]{CZ24}.

We organize this article as follows.  Section \ref{sec2} contains fundamental facts and concepts about weight enumerators of quantum codes, including nice error basis and error groups, (double and complete) weight distributions, (double and complete)  weight enumerators, and three MacWilliams identities. 
In Section \ref{sec3}, we present a quick introduction to invariant theory of finite groups 
and show that the (Shor-Laflamme) weight enumerator $B(x,y)$ of a formally self-dual quantum code can be expressed 
by two algebraic independent invariant polynomials of $S_2$, the symmetric group of degree $2$; see
Corollary \ref{coro3.5}. This also drives a quantum  analogue of the famous Gleason's theorem (See \cite{Gle71} or \cite[Theorem 3c]{Slo77}).  Section \ref{sec4}  describes the double weight enumerators $C(x,y,z,w)$ of 
formally self-dual quantum codes. We show  in Corollary \ref{coro4.6} that the corresponding invariant ring $\C[V]^G$
is of Krull dimension $3$ and generated by five invariant polynomials $\{g_1,g_2,\dots,g_5\}$. As a direct consequence, the double weight enumerators $C(x,y,z,w)$ of a formally self-dual quantum code can be expressed by 
$\{g_1,g_2,\dots,g_5\}$; see  Corollary \ref{coro4.8}. In Section \ref{sec5},
we provide two explicit examples, computing the complete weight enumerator $D(M)$ of 
a formally self-dual quantum code for $q=2$ and $3$.  Our results show that 
the invariant ring $\C[V]^G$ for $q=2$ is a polynomial algebra as well as $\C[V]^G$ for $q=3$ is not polynomial but it is isomorphic to the tensor product of a polynomial algebra and the second Veronese subring of the polynomial ring in three variables; see Theorems \ref{thm5.1} and  \ref{thm5.2}, respectively. These two examples also clearly demonstrate the complexity involved in computing the complete weight enumerators of self-dual quantum codes.

Throughout this article, we assume that $n\in\N^+:=\{1,2,3,\dots\}$. We use $I_n$ to denote the identity matrix of degree $n$;  write $\GL_n(\C)$ for the general linear group of degree $n$ over the 
complex field $\C$; and denote by $S_n$ the symmetric group of degree $n$. 
We write $A^t$ for the transpose of a matrix (or a vector) $A$.

\section{Weight Enumerators of Quantum Codes}\label{sec2}
\setcounter{equation}{0}
\renewcommand{\theequation}
{2.\arabic{equation}}
\setcounter{theorem}{0}
\renewcommand{\thetheorem}
{2.\arabic{theorem}}

\noindent In this preliminary section, we recap some fundamental concepts and facts about quantum error-correcting codes, weight enumerators, and the MacWilliams identities. Let $p$ be a prime and $\F_q$ be a finite field of order $q=p^s$ for some $s\in\N^+$. Let $\upzeta_p:=e^{\frac{2\uppi \sqrt{-1}}{p}}$ be a primitive $p$-th root of unity and $\tr$
denote the trace map from $\F_q$ to $\F_p$, i.e.,
\begin{equation}
\label{ }
\tr(a):=\sum_{i=0}^{s-1} a^{p^i}
\end{equation}
for all $a\in\F_q$. We write $\Tr(A)$ for the trace of a linear operator $A$ and $B^\dag$ denotes the Hermitian transpose of a complex unitary linear operator $B$.


\subsection{Error groups}

Consider the state space $\C^q$ of a quantum mechanical system, which may be regarded as a $q$-dimensional Hilbert space over $\C$ equipped with the Hermitian inner product. Let us fix an orthonormal basis $\{\cod x\mid x\in\F_q\}$ of $\C^q$. The tensor product $(\C^q)^{\otimes n}$ of $n$ copies of $\C^q$ is used to transmit
$n$ qubits of information. Recall that a \textit{quantum} (\textit{error-correcting}) \textit{code} $Q$ is a nonzero $K$-dimensional subspace of $\C^{q^n}=(\C^q)^{\otimes n}$, where $K=\dim_\C(Q)\in\N^+$ and $n$ is called the \textit{length} of $Q$.  Vectors in $Q$ are called \textit{codewords}. Besides the length $n$ and the dimension $K$, an additional fundamental parameter of $Q$ is the minimum distance $d$, which measures the error-detecting capability of $Q$. Accordingly, $Q$ is called an \textit{$((n,K,d))_q$-code} if it has minimum distance $d$. In some situations, when the minimum distance $d$ is not under consideration, we also refer to $Q$ as an \textit{$((n,K))_q$-code}.

To evaluate the performance of a code $Q$, an appropriate error model must be specified. Here we choose the error model appeared in \cite[Section 2]{KKKS06}, which has been used extensively in the literature. 
We briefly recap the basic facts and constructions associated with this model. Given arbitrary two elements $a,b\in\F_q$, we may define two unitary operators $X_a$ and $Z_b$ on $\C^q$:
\begin{equation}
\label{ }
X_a:\cod x\mapsto\cod{x+a}\textrm{ and }Z_b: \cod x \mapsto \upzeta_p^{\tr(bx)}\cod x.
\end{equation}
The following set formed by $X_a$ and $Z_b$
\begin{equation}
\label{ }
E:=\{X_aZ_b\mid a,b\in\F_q\}
\end{equation}
is called a \textit{set of error operators}. By \cite[Lemma 1]{KKKS06}, we see that the set $E$ is a nice error basis on 
$\C^q$. In other words, (1) $E$ contains the identity map; (2) the composition of any two elements in $E$ is closed up to
a scalar; (3) $\Tr(A^\dag B)=0$ for all distinct $A,B\in E$.

Note that the tensor product of two nice error bases is also a nice error basis; see  \cite[Lemma 3]{KKKS06}.
To construct a nice error basis on $\C^{q^n}=(\C^q)^{\otimes n}$, we write 
\begin{equation}
\label{ }
\left\{\cod{\tx}=\cod{x_1}\otimes \cod{x_2}\otimes \cdots\otimes \cod{x_n}\mid \tx=(x_1,x_2,\dots,x_n)\in\F_q^n\right\}
\end{equation}
for a basis of $\C^{q^n}$. Let $\textbf{a}=(a_1,\dots,a_n)$ and $\textbf{b}=(b_1,\dots,b_n)$ be two vectors in $\F_q^n$. 
We may define
\begin{equation}
\label{ }
X_\ta:=X_{a_1}\otimes \cdots \otimes X_{a_n}\textrm{ and }
Z_\tb:=Z_{b_1}\otimes \cdots \otimes Z_{b_n}.
\end{equation}
It follows from \cite[Corollary 4]{KKKS06} that the set 
\begin{equation}
\label{ }
E_n:=\{X_\ta Z_\tb=X_{a_1}Z_{b_1}\otimes \cdots \otimes X_{a_n}Z_{b_n}\mid \ta, \tb\in\F_q^n\}
\end{equation}
is a nice error basis on $\C^{q^n}$, and also can be viewed as a basis for the vector space of all 
$q^n\times q^n$-matrices over $\C$. Moreover, one observes that
\begin{eqnarray}
X_\ta  \cod{\tx}& = & \cod{\tx+\ta} \\
Z_\tb  \cod{\tx} & = &  \upzeta_p^{\tr(\tx\cdot \tb)}\cod{\tx}
\end{eqnarray}
where $\tx\cdot \tb:=\sum_{i=1}^nx_ib_i\in\F_q$; see \cite[Section 2]{HYY20}.

The group $G_n$, generated by all elements in $E_n$, is called the \textit{error group}
associated with the nice error basis $E_n$. For any $a,a',b,b'\in\F_q$, the following identity
\begin{equation}
\label{ }
X_aZ_bX_{a'}Z_{b'}=\upzeta_p^{\tr(ba')}X_{a+a'}Z_{b+b'}
\end{equation}
which has been verified in the proof of \cite[Lemma 1]{KKKS06}, shows that
\begin{equation}
\label{ }
G_n=\{\upzeta_p^c X_\ta Z_\tb\mid \ta,\tb\in\F_q^n,c\in\F_p\}.
\end{equation}
Clearly, $G_n$ is a finite group of order $pq^{2n}$, not necessarily abelian. A \textit{stabilizer code}  is a nonzero subspace $Q$
of $\C^{q^n}$ such that
\begin{equation}
\label{ }
Q=\bigcap_{e\in S}\{v\in \C^{q^n}\mid e(v)=v\}
\end{equation}
for some subgroup $S$ of $G_n$.

\begin{exam}[Bell state code]\label{Bell1}{\rm
Consider $q=p=2$ and $n=2$. We may write $\{\cod{00},\cod{01},
\cod{10},\cod{11}\}$ for a basis for $\C^4=(\C^2)^{\otimes 2}$. Then 
$$X_0=Z_0=I_2,~ X_1=\begin{pmatrix}
     0 & 1   \\
     1 &  0
\end{pmatrix},~ Z_1=\begin{pmatrix}
     1 & 0   \\
     0 &  -1
\end{pmatrix},\textrm{ and } X_1Z_1=\begin{pmatrix}
     0 & -1   \\
     1 &  0
\end{pmatrix}$$
form the set $E$ of error operators.  Hence,
$$
E_2=\{X_{i}Z_{j} \otimes  X_{s}Z_{t}\mid i,j,s,t\in\{0,1\}\}
$$
is a subset of $4\times 4$-matrices, and has order $2^4=16$. Elements in $E_2$ can be expressed 
as the Kronecker product of two $2\times 2$-matrices.  For instance, 
$$X_{0}Z_{1} \otimes  X_{1}Z_{0}=Z_{1} \otimes  X_{1}=\begin{pmatrix}
     1 & 0   \\
     0 &  -1
\end{pmatrix}\otimes \begin{pmatrix}
     0 & 1   \\
     1 &  0
\end{pmatrix}=\begin{pmatrix}
     0 &1&0&0    \\
     1 &0&0&0\\  
    0 &0&0&-1\\
    0 &0&-1&0
\end{pmatrix}.$$
Moreover, the 1-dimensional quantum code $Q_B$, spanned by 
$$\frac{1}{\sqrt{2}}(\cod{00}+\cod{11})$$
is called the \textit{Bell state code}. 
\hbo}\end{exam}

\subsection{Weight enumerators of quantum codes}

Recall that the \textit{symplectic weight} of a vector $(\ta|\tb)$ in $\F_q^{2n}$ is defined as:
\begin{equation}
\label{ }
\swt((\ta|\tb)):=\# \{1\leqslant i\leqslant n\mid (a_i,b_i)\neq (0,0)\},
\end{equation}
and one defines  the \textit{weight} $\wt(e)$ of an element $e=\upzeta^cX_\ta Z_\tb \in G_n$ as
\begin{equation}
\label{ }
\wt(e):=\swt((\ta|\tb)).
\end{equation}
In particular, the weight of a scalar multiple of the identity map is zero. We fix an ordering 
$\upalpha_0=0,\upalpha_1,\dots,\upalpha_{q-1}$ for the elements in $\F_q$. For $\uplambda,\upmu\in\{0,1,\dots,q-1\}$, we use $N_{\uplambda,\upmu}(e)$ to denote 
the number of the operator $X_{\upalpha_\uplambda}Z_{\upalpha_\upmu}$ occurred in the expression of $e\in G_n$. Note that $\wt(e)$ reveals the number of non-identity tensor components of $e$. Thus $\wt(e)$ can be written as the sum of all $N_{\uplambda,\upmu}(e)$ for all $(\uplambda,\upmu)\neq (0,0)$. In other words,
\begin{equation}
\label{ }
\wt(e)=\sum_{(\uplambda,\upmu)\neq (0,0)}N_{\uplambda,\upmu}(e).
\end{equation}

Two related notions of weights are introduced in \cite{HYY20}. The \textit{$X$-weight} $\wt_X(e)$ of $e$ is defined as the sum of all 
$N_{\uplambda,\upmu}(e)$ with $\uplambda\neq 0$, as well as the \textit{$Z$-weight} $\wt_Z(e)$ of $e$ is defined as the sum of all $N_{\uplambda,\upmu}(e)$ with $\upmu\neq 0$. Thus
\begin{eqnarray}
\wt_X(e)&=&\wt(e)-\sum_{\upmu\neq 0} N_{0,\upmu}(e) \\
\wt_Z(e)&=&\wt(e)-\sum_{\uplambda\neq 0} N_{\uplambda, 0}(e).
\end{eqnarray}
Furthermore, for $i,j\in\N$, we define the following two error sets: 
\begin{equation}
\label{ }
E[i]:=\{e\in E_n\mid \wt(e)=i\} \textrm{ and } E[i,j]:=\{e\in E_n\mid \wt_X(e)=i, \wt_Z(e)=j\}.
\end{equation}
We write $\updelta(n)$ for the set consisting of all $q\times q$-matrices whose 
entries are non-negative integers with the total sum $n$. The error set
$E[J]$ associated to an index matrix $J=(J_{\uplambda,\upmu})\in\updelta(n)$ is defined as
\begin{equation}
\label{ }
E[J]:=\{e\in E_n\mid N_{\uplambda,\upmu}(e)=J_{\uplambda,\upmu},\textrm{ for all }\uplambda,\upmu=0,1,\dots,q-1\}.
\end{equation}

Now we consider an $((n,K))_q$-code $Q$ and use $P$ to denote the orthogonal projection from $\C^{q^n}$ to $Q$. 
The \textit{weight distributions} for $Q$, originally due to Shor and Laflamme,  are defined by the two sequences of numbers:
\begin{equation}
\label{ }
B_i:=\frac{1}{pK^2}\sum_{e\in E[i]}\Tr(e^\dag P)\Tr(eP)\textrm{ and } 
B_i^\bot:=\frac{1}{pK}\sum_{e\in E[i]}\Tr(e^\dag PeP),
\end{equation}
which correspond to two \textrm{weight enumerators} of $Q$:
\begin{equation}
\label{ }
B(x,y):=\sum_{i=0}^n B_i\cdot x^{n-i}y^i\textrm{ and }B^\bot(x,y):=\sum_{i=0}^n B_i^\bot\cdot x^{n-i}y^i.
\end{equation}
See for example, \cite{SL97} and \cite[Section 5]{KKKS06} for more details.
Replacing the the error set $E[i]$ by $E_{i,j}$ and $E[J]$, two new concepts of weight distributions of $Q$: 
double weight distribution and complete weight distribution, are introduced in \cite{HYY19} and \cite{HYY20}.
More precisely,  the \textit{double weight distributions} for $Q$ are defined by
\begin{equation}
\label{ }
C_{i,j}:=\frac{1}{pK^2}\sum_{e\in E[i,j]}\Tr(e^\dag P)\Tr(eP)\textrm{ and } 
C_{i,j}^\bot:=\frac{1}{pK}\sum_{e\in E[i,j]}\Tr(e^\dag PeP),
\end{equation}
and the corresponding \textit{double weight enumerators} are defined as
\begin{equation}
\label{ }
C(x,y,z,w):=\sum_{i,j=0}^n C_{ij}\cdot x^{n-i}y^iz^{n-j}w^j\textrm{ and } C^\bot(x,y,z,w):=
\sum_{i,j=0}^n C_{ij}^\bot\cdot x^{n-i}y^iz^{n-j}w^j.
\end{equation}
The  \textit{complete weight distributions} for $Q$ are defined by
\begin{equation}
\label{ }
D_{J}:=\frac{1}{pK^2}\sum_{e\in E[J]}\Tr(e^\dag P)\Tr(eP)\textrm{ and } 
D_{J}^\bot:=\frac{1}{pK}\sum_{e\in E[J]}\Tr(e^\dag PeP).
\end{equation}
The \textit{complete weight enumerators} of $Q$ can be expressed as polynomials associated with
a $q\times q$-matrix $M=(M_{\uplambda,\upmu})$:
\begin{equation}
\label{ }
D(M):=\sum_{J=(J_{\uplambda,\upmu})\in \updelta(n)} D_{J}\cdot M^J\textrm{ and } 
D^\bot(M):=\sum_{J=(J_{\uplambda,\upmu})\in \updelta(n)} D_{J}^\bot\cdot M^J,
\end{equation}
where $M^J$ is defined by $\prod_{\uplambda,\upmu\in\{0,1,\dots,q-1\}}M_{\uplambda,\upmu}^{J_{\uplambda,\upmu}}.$

We provide the following example to illustrate how to compute the complete weight (distributions) enumerators of quantum codes.

\begin{exam}\label{Bell2}{\rm
Let us continue to work on the Bell state code $Q_B$ appeared in Example \ref{Bell1}. With respect to the basis 
$\{\cod{00},\cod{01},\cod{10},\cod{11}\}$, the orthogonal projection $P:\C^4\ra\C^4$ can be expressed by
$$P=\frac{1}{2}\begin{pmatrix}
    1  & 0&0 & 1  \\
    0  & 0 &0 &0\\
   0  & 0 &0 &0\\
     1 &0&0&1
\end{pmatrix}.$$
The set $\updelta(2)$ of index matrices consists of the following $10$ matrices (in this ordering):
\begin{eqnarray*}
\begin{pmatrix}
    0  & 0   \\
     0 & 2 
\end{pmatrix}, \begin{pmatrix}
    0  & 0   \\
     1 & 1 
\end{pmatrix}, \begin{pmatrix}
    0  & 0   \\
     2& 0 
\end{pmatrix}, \begin{pmatrix}
    0  & 1  \\
     0 & 1
\end{pmatrix}, \begin{pmatrix}
    0  & 1   \\
     1 & 0 
\end{pmatrix}, \\
\begin{pmatrix}
    0  & 2   \\
     0 & 0 
\end{pmatrix}, \begin{pmatrix}
    1  & 0   \\
     0 & 1 
\end{pmatrix}, \begin{pmatrix}
    1  & 0   \\
     1& 0 
\end{pmatrix}, \begin{pmatrix}
    1  & 1  \\
     0 & 0
\end{pmatrix}, \begin{pmatrix}
    2 & 0   \\
     0 & 0 
\end{pmatrix}.
\end{eqnarray*}
For all matrices $J\in\updelta(2)$ in the ordering above, i.e., $J_1=\begin{pmatrix}
    0  & 0   \\
     0 & 2 
\end{pmatrix},\dots,J_{10}=\begin{pmatrix}
    2 & 0   \\
     0 & 0 
\end{pmatrix}$, the cardinalities of the corresponding error sets $E(J)$ are
$$\{1,2,1,2,2,1,2,2,2,1\}$$
respectively.  For example, to determine the elements of $E(J_1)$, we assume that $e\in E(J_1)$ denotes 
an any element. Note that
$N_{0,0}(e)=N_{0,1}(e)=N_{1,0}(e)=0$ and $N_{1,1}(e)=2$. Thus $X_1Z_1$ appears twice in the expression of $e$. 
This implies that 
$$e=X_1Z_1\otimes X_1Z_1=\begin{pmatrix}
     0 & -1   \\
     1 &  0
\end{pmatrix}\otimes \begin{pmatrix}
     0 & -1   \\
     1 &  0
\end{pmatrix}=\begin{pmatrix}
     0 &0&0&1    \\
     0 &0&-1&0\\  
    0 &-1&0&0\\
    1 &0&0&0
\end{pmatrix}.$$
A similar method can be applied to determine the elements in $E(J_i)$ for all $i=1,2,\dots,10$. Moreover, 
a direct computation shows that
$D_{J_i}=\frac{1}{2}=D^\bot_{J_i}$ for $i\in\{1,3,6,10\}$ and $D_{J_i}=0=D^\bot_{J_i}$ for others. 
Hence, we have
\begin{eqnarray*}
D^\bot(M)&=&D(M)=D_{J_1}\cdot M^{J_1}+D_{J_3}\cdot M^{J_3}+D_{J_6}\cdot M^{J_6}+D_{J_{10}}\cdot M^{J_{10}}\\
&=&\frac{1}{2}\left(M_{11}^2 + M_{10}^2 + M_{01}^2 + M_{00}^2\right)
\end{eqnarray*}
where $M=\begin{pmatrix}
   M_{00}   & M_{01}   \\
    M_{10}  & M_{11} 
\end{pmatrix}$ denotes the matrix of variables.
\hbo}\end{exam}

More examples that illustrate the notions above and compute double weight enumerators of quantum codes can be found in \cite[Section 3]{HYY19} and \cite[Example 1]{HYY20}.

\subsection{MacWilliams identities}

Let $Q$ be an $((n,K))_q$-code. Several MacWilliams identities remain to hold for the
weight enumerators $B,B^\bot$, the double weight enumerators $C,C^\bot$, and the complete weight enumerators $D,D^\bot$ for $Q$:
\begin{eqnarray}
B(x,y)&=&\frac{1}{q^n\cdot K}\cdot B^{\bot}\left(x+(q^2-1)y,x-y\right); \label{Mac1}\\
C(x,y,z,w)&=&\frac{1}{K}\cdot C^{\bot}\left(x+(q-1)y,x-y,\frac{z+(q-1)w}{q},\frac{z-w}{q}\right); \label{Mac2}\\
D(M)&=& \frac{1}{K}\cdot D^{\bot}(M^\bot), \label{Mac3}
\end{eqnarray}
where $M=(M_{\uplambda,\upmu})$ and $M^\bot=(M_{\uplambda',\upmu'}^\bot)$ denote
$q\times q$-matrices with entries 
$$M_{\uplambda',\upmu'}^\bot=\frac{1}{q}\cdot\sum_{\uplambda,\upmu\in\{0,1,\dots,q-1\}}
\upzeta_p^{\tr(\upalpha_{\uplambda'}\cdot \upalpha_{\upmu}-\upalpha_{\uplambda}\cdot \upalpha_{\upmu'})} M_{\uplambda,\upmu}.$$
See \cite[Theorem 5]{HYY20} for the detailed proofs of these three identities.

We say that an $((n,K))_q$-code $Q$ is  \textit{formally self-dual} if its complete weight enumerators satisfy the following  relation: 
\begin{equation}
\label{FSD}
D(M)=\frac{1}{K}\cdot D^\bot(M).
\end{equation}

\begin{exam}\label{exam2.3}{\rm
By Examples  \ref{Bell1} and \ref{Bell2}, the Bell state code $Q_B$ is a formally self-dual code of the type $((2,1))_2$, with the complete weight enumerator $D(M)=\frac{1}{2}\left(M_{11}^2 + M_{10}^2 + M_{01}^2 + M_{00}^2\right)$.
\hbo}\end{exam}

\begin{prop}\label{prop2.4}
If  an $((n,K))_q$-code $Q$ is formally self-dual, then
\begin{equation}\label{eq2.16}
B(x,y)=\frac{1}{K}\cdot B^\bot(x,y)\textrm{ and }C(x,y,z,w)=\frac{1}{K}\cdot C^\bot(x,y,z,w).
\end{equation}
\end{prop}

\begin{proof}
We write $\upPhi(x,y)$ for the $q\times q$-matrix for which the entry at the first column and
first row is $x$, and other entries all are $y$. By \cite[Theorem 4, (3.1)]{HYY20}, we see that
$K\cdot B(x,y)=K\cdot D(\upPhi(x,y))\oeq{FSD} D^\bot(\upPhi(x,y))=B^\bot(x,y)$, where the last equation holds from
\cite[Theorem 4, (3.2)]{HYY20}. This shows that $B(x,y)=\frac{1}{K}\cdot B^\bot(x,y).$ Moreover, write $\upPsi(x,y,z,w)$
for the $q\times q$-matrix in which the first row is $(xz,yz,\dots,yz)$, the first column is $(xz,xw,\dots,xw)^t$, and other entries are $yw$. The equation (3.3) in \cite[Theorem 4]{HYY20} implies that
$K\cdot C(x,y,z,w)=K\cdot D(\upPsi(x,y,z,w))\oeq{FSD} D^\bot(\upPsi(x,y,z,w))=C^\bot(\upPsi(x,y,z,w))$, where the last equation follows from the equation (3.4) in \cite[Theorem 4]{HYY20}. Thus, $C(x,y,z,w)=\frac{1}{K}\cdot C^\bot(x,y,z,w).$
\end{proof}

We close this section with the following remark that illustrates how to use the language of group actions and invariant theory to understand the weight enumerator $B(x,y)$ of a formally self-dual code $Q$. We will take the same language to examine the algebraic properties of the double  weight enumerator $C(x,y,z,w)$ and the complete weight enumerator $D(M)$ for $Q$ in Sections \ref{sec4} and \ref{sec5}.

\begin{rem}{\rm
Suppose that $Q$ denotes a formally self-dual code of type $((n,K))_q$. Let us consider the weight enumerator $B(x,y)$ of $Q$.
Note that $B(x,y)$ is a homogeneous polynomial of degree $n$ in $\C[x,y]$.
By the MacWilliams identity (\ref{Mac1}) and (\ref{eq2.16}), we see that
\begin{eqnarray*}
B(x,y)&=&\frac{1}{q^n\cdot K}\cdot B^{\bot}(x+(q^2-1)y,x-y)\\
&=&\frac{1}{q^n\cdot K}\cdot K\cdot B(x+(q^2-1)y,x-y)\\
&=&B\left(\frac{1}{q}\cdot x+\frac{q^2-1}{q}\cdot y,\frac{1}{q}\cdot x-\frac{1}{q}\cdot y\right).
\end{eqnarray*}
We  define $$\upsigma:=\begin{pmatrix}
   \frac{1}{q}   &  \frac{q^2-1}{q} \\
   \frac{1}{q}    &  -\frac{1}{q}
\end{pmatrix}\in\GL_2(\C).$$ A direct verification shows that
$\upsigma^2=I_2$ and so $\upsigma^{-1}=\upsigma$. Using the language of group actions and invariant theory (see Section \ref{sec3} below for details), we have
\begin{eqnarray*}
\upsigma\cdot B(x,y)&=&B\left((\upsigma^{-1})^t(x),(\upsigma^{-1})^t(y)\right)=B(\upsigma^t(x),\upsigma^t(y))\\
&=&B\left(\frac{1}{q}\cdot x+\frac{q^2-1}{q}\cdot y,\frac{1}{q}\cdot x-\frac{1}{q}\cdot y\right)\\
&=&B(x,y)
\end{eqnarray*}
where $\upsigma^t$ denotes the transpose of $\upsigma$. This means that the enumerator $B(x,y)$ can be viewed as a polynomial invariant under the action of the cyclic group of order $2$ generated by $\upsigma$. Therefore, the question of finding all possible $B(x,y)$ for formally self-dual codes is equivalently to a question of computing polynomial invariants of a finite group. 
\hbo}\end{rem}

\section{Weight Enumerators of Self-dual Quantum Codes}\label{sec3}
\setcounter{equation}{0}
\renewcommand{\theequation}
{3.\arabic{equation}}
\setcounter{theorem}{0}
\renewcommand{\thetheorem}
{3.\arabic{theorem}}

\noindent This section aims to use the language of representation theory and invariant theory  to describe 
the  weight enumerator $B(x,y)$ of a formally self-dual quantum code $Q$ of the type $((n,K))_q$, and to derive a quantum analogue of Gleason's theorem in Corollary \ref{coro3.5} below. 

\subsection{Polynomial invariant theory}

Let $G$ be a group (not necessarily a finite group) and $V$ be a finite-dimensional faithful representation of $G$ over a field $k$.
Let $V^*$ denote the dual space of $V$ and $k[V]$ denote the symmetric algebra on $V^*$. The action of $G$ on $V^*$
can be extended algebraically to a $k$-linear action of $G$ on $k[V]$. After choosing  a basis $\{e_1,\dots,e_m\}$ for $V$
and a basis $\{x_1,\dots,x_m\}$ for $V^*$ dual to $\{e_1,\dots,e_m\}$, we may identify $k[V]$ with
the polynomial ring $k[x_1,\dots,x_m]$. The action of $G$ on $k[V]$ can be defined as
\begin{equation}
\label{ }
(\upsigma\cdot f)(v):=f(\upsigma^{-1}\cdot v)
\end{equation}
for all $\upsigma\in G, f\in k[V]$, and $v\in V$. This action is also degree-preserving, i.e., if $f\in k[V]$ is a 
homogeneous polynomial of degree $d$, then $\upsigma\cdot f$ is also homogeneous and has the same degree $d$.

More precisely, if $\upsigma=(a_{ij})_{m\times m}$ is a group element, we assume throughout this article that the action of $\upsigma$ on a polynomial $f(x_1,\dots,x_m)$ is given by
\begin{equation}
\begin{aligned}
\upsigma(x_1) & =  a_{11}x_1+a_{21}x_2+\cdots+a_{m1}x_m \\
\upsigma(x_2) & =  a_{12}x_1+a_{22}x_2+\cdots+a_{m2}x_m \\ 
\vdots &\hspace{1.5cm}\vdots \hspace{2cm}\vdots\\
\upsigma(x_m) & =  a_{1m}x_1+a_{2n}x_2+\cdots+a_{mm}x_m.
\end{aligned}
\end{equation}

\begin{exam}{\rm
Consider the matrix $\upsigma=\begin{pmatrix}
    1  & 0   \\
    1  &  1
\end{pmatrix}\in\GL_2(\C)$ and let $G$ be the subgroup of $\GL_2(\C)$ generated by $\upsigma$. Then
$\upsigma^i=\begin{pmatrix}
    1  & 0   \\
    i  &  1
\end{pmatrix}$ for all $i\in\Z$ and $G\cong (\Z,+)$ is an infinite cyclic group. We use $V$ with the basis  $\{e_1,e_2\}$ to denote the standard 
two-dimensional representation of $G$ over $\C$. Then
$\upsigma(e_1)=e_1+e_2$ and $\upsigma(e_2)=e_2$. Note that the resulting matrix of $\upsigma$ on $V^*$ is 
$$(\upsigma^{-1})^t=\begin{pmatrix}
    1  & -1   \\
    0  &  1
\end{pmatrix}.$$ Hence, $\upsigma(x_1)=x_1$ and $\upsigma(x_2)=-x_1+x_2$.
\hbo}\end{exam}

The subring $k[V]^G$ of $k[V]$ consisting of all polynomials fixed by the action of $G$ is called the \textit{invariant ring} of $G$ on $V$. Namely,
\begin{equation}
\label{ }
k[V]^G:=\{f\in k[x_1,\dots,x_m]\mid \upsigma(f)=f,\textrm{ for all }\upsigma\in G\}
\end{equation}
which is the main object of study in polynomial invariant theory. If $G$ is a finite group, a theorem due to Emmy Noether in 1923 states that $k[V]^G$ is a finitely generated $\N$-graded commutative algebra over $k$; see \cite[Proposition 3.0.1]{DK15} for a modern proof of this theorem. If $G$ is linearly reductive group and $V$ is a rational representation, the same conclusion remains to hold by the famous theorem due to David Hilbert;  see for example, \cite[Theorem 2.2.10]{DK15}.

To understand the algebraic structure of an invariant ring $k[V]^G$,  the study of two fundamental questions plays a core role 
 in invariant theory. The first question is about how to find a minimal generating set for $k[V]^G$, and the second one asks how to find a set of generating relations among these generators. Suppose that $G$ is finite. The invariant ring $k[V]^G$ is said to be \textit{modular} if the characteristic of $k$ divides the order of $G$; otherwise, \textit{nonmodular}.  The nonmodular case includes two subcases: (1)  the characteristic of $k$ is zero; (2) the characteristic of $k$ is positive but doesn't divide the order of $G$. Nonmodular invariant theory of finite groups has been understood very well while modular invariant theory is a challenging topic; see for example, \cite{DK15} or \cite{CW11} for general references of invariant theory of finite groups. 
 
 We close this subsection by recalling the following algorithm appeared in \cite{CZ24}, which illustrates how to transform the generating set of $k[W_1]^G$ to a generating set of $k[W_2]^G$ for two equivalent representations  $W_1$ and $W_2$ of a finite group $G$ over a field $k$. More precisely, suppose that
$\A$ denotes a homogeneous generating set of $k[W_1]^G$. Assume that $d$ denotes the maximal degree of elements of $\A$ and arrange the elements of $\A$ in the ascending order of degree, i.e., $\A=\{f_1,f_2,\dots,f_s\}$ with $\deg(f_1)\leqslant \deg(f_2)\leqslant \cdots\leqslant \deg(f_s)=d$.

\begin{algm} \label{algm}
We may construct a homogeneous generating set $\B$ for $k[W_2]^G$ from $\A$ by performing the following steps:
\begin{enumerate}
  \item Find $T\in GL_n(k)$ such that $g_{W_2}=T^{-1}\circ g_{W_1}\circ T$
for all $g\in G$;
  \item Let $\B:=\emptyset, f:=f_1$ and repeat Steps $(2) - (3)$ where $f$ runs over $\A$;
  \item Compute the invertible matrix $T_{\deg(f)}^{-1}$ and add the image $T_{\deg(f)}^{-1}\cdot f$ into $\B$;
  \item After $s$ steps, this algorithm terminates and $\B$ is a homogeneous generating set of $k[W_2]^G$.
\end{enumerate}
\end{algm}

\noindent See \cite[Algorithm 3.5]{CZ24} for a detailed proof of this algorithm.

\subsection{Weight enumerators}

We are able to use the invariant theory of finite groups to describe the weight enumerator $B(x,y)$ of an arbitrary formally self-dual code $Q$ of the type $((n,K))_q$. 

Throughout this subsection, we consider 
$$\upsigma=\begin{pmatrix}
   \frac{1}{q}   &    \frac{q^2-1}{q} \\
 \frac{1}{q}    &  -\frac{1}{q}
\end{pmatrix}
$$
and the cyclic subgroup $G=\ideal{\upsigma}$ of $\GL_2(\C)$, generated by $\upsigma$. As $\upsigma^2=I_2$, it follows that
$|G|=2$ and $G\cong S_2$, the symmetric group of degree $2$. 

We write $V$ for the $2$-dimensional standard representation of $G$ and $\{x,y\}$ for the dual basis of $V^*$. Thus
$\C[V]=\C[x,y]$ and $\C[V]^G=\C[x,y]^G$. Note that $\upsigma^{-1}=\upsigma$ and
\begin{equation}
\label{ }
\upsigma: x\mapsto \frac{1}{q} \cdot x +   \frac{q^2-1}{q}\cdot y\textrm{ and } \upsigma:y\mapsto\frac{1}{q} \cdot x -   \frac{1}{q}\cdot y.
\end{equation}
Define
\begin{equation}
\begin{aligned}
f_1 & :=  x+ (q-1)\cdot y\\
f_2 & :=  \left(x- (q+1) \cdot y\right)^2.
\end{aligned}
\end{equation}
Clearly, $f_1,f_2\in \C[V]^G$ both are $G$-invariant. Moreover, 

\begin{thm}\label{thm3.3}
$\C[V]^G=\C[f_1,f_2]$ is a polynomial algebra over $\C$.
\end{thm}

\begin{rem}{\rm
The standard method  to prove this statement in invariant theory is to apply the criterion appeared in 
\cite[Proposition 16]{Kem96}. 
First of all, we note that $V$ is a faithful representation and $|G|=|S_2|=2=\deg(f_1)\cdot \deg(f_2)$. Secondly, a direct computation verifies that the determinant of the Jacobian matrix of $\{f_1,f_2\}$ is nonzero, thus $f_1,f_2$ are algebraically independent over $\C$. Applying \cite[Proposition 16]{Kem96} shows that $\C[V]^G$ is a polynomial algebra over $\C$, generated by $\{f_1,f_2\}$.
\hbo}\end{rem}

However, below we would like to provide a constructive approach using Algorithm \ref{algm} to prove Theorem \ref{thm3.3}. 
This approach shows how we derive the generators $f_1$ and $f_2$ explicitly.

\begin{proof}[Proof of Theorem \ref{thm3.3}]
Let us define
$$T:=\begin{pmatrix}
   \frac{q+1}{q}   &  \frac{q^2-1}{q}  \\
  \frac{1-q}{q}    &  \frac{q^2-1}{q}
\end{pmatrix}.$$
A direct verification shows that
$$\upsigma=T^{-1}\cdot\begin{pmatrix}
   1   & 0   \\
     0 & -1 
\end{pmatrix}\cdot T.$$ Suppose that $H$ denotes the subgroup of $\GL_2(\C)$ generated by $\begin{pmatrix}
   1   & 0   \\
     0 & -1 
\end{pmatrix}$ and $W$ denotes the standard representation of $H$. Then $W\cong V$ are isomorphic  as 
$S_2$-representations. It is not difficult  to verify that $$\C[W]^H=\C[x,y]^H=\C[x,y^2].$$  By \cite[Proposition 3.1]{CZ24}, we see that
$\C[V]^G\cong \C[V]^H$ is a polynomial algebra over $\C$. Moreover, by  Algorithm \ref{algm},
the first $H$-invariant $x$, together with the action of $T$, produces a $G$-invariant:
$$\frac{q+1}{q}\cdot x+\frac{q^2-1}{q} \cdot y$$
which gives rise to the first generator $f_1=x+ (q-1)\cdot y$ of  $\C[V]^G$ via dividing the nonzero scalar $\frac{q+1}{q}$.
The second $H$-invariant $y^2$, together with $T$, obtains another $G$-invariant:
$$\left(\frac{1-q}{q}\cdot x+\frac{q^2-1}{q}\cdot y\right)^2$$
which produces the second generator $f_2=\left(x- (q+1) \cdot y\right)^2$ via dividing $\left(\frac{1}{q}-1\right)^2$.
\end{proof}

\begin{coro}\label{coro3.5}
Let $B(x,y)$ be the weight enumerator of a formally self-dual quantum code $Q$ of the type $((n,K))_q$. Then $B(x,y)$ is a polynomial expressed by $\{f_1,f_2\}$. 
\end{coro}

\begin{proof}
By the MacWilliams identity (\ref{Mac1}) and the fact that $B(x,y)=\frac{1}{K}\cdot B^\bot(x,y)$, we see that
$$
B(x,y)=B\left(\frac{1}{q}\cdot x+\frac{q^2-1}{q}\cdot y,\frac{1}{q}\cdot x-\frac{1}{q}\cdot y\right)=
B(\upsigma\cdot x,\upsigma\cdot y)=\upsigma\cdot B(x,y).
$$
This means that $B(x,y)$ is a $G$-invariant polynomial, thus, $B(x,y)\in \C[V]^G$. By Theorem \ref{thm3.3}, we see that
$B(x,y)\in \C[f_1,f_2]$.
\end{proof}

\section{Double Weight Enumerators of Self-dual Quantum Codes}\label{sec4}
\setcounter{equation}{0}
\renewcommand{\theequation}
{4.\arabic{equation}}
\setcounter{theorem}{0}
\renewcommand{\thetheorem}
{4.\arabic{theorem}}

\noindent This section is devoted  to describing the double  weight enumerator $C(x,y,z,w)$ of an arbitrary formally self-dual quantum code $Q$ of the type $((n,K))_q$. By the MacWilliams identity (\ref{Mac2}) and Proposition \ref{prop2.4}, we see that
\begin{equation}
\label{ }
C(x,y,z,w)=C\left(x+(q-1)y,x-y,\frac{z+(q-1)w}{q},\frac{z-w}{q}\right).
\end{equation}

We define $$\upsigma:=\begin{pmatrix}
\frac{1}{q}&\frac{q-1}{q}&0&0\\
\frac{1}{q}&-\frac{1}{q}& 0&0&\\
0&0&     1  &  q- 1 \\
 0&0&   1  & -1 \\
\end{pmatrix}.$$
Clearly, $\upsigma^2=\dia\left\{\frac{1}{q},\frac{1}{q},q,q\right\}$ and 
$$(\upsigma^{-1})^t:=\begin{pmatrix}
    1  &   1 & 0&0\\
    q-1  & -1&0&0 \\
    0&0&\frac{1}{q}&\frac{1}{q}\\
    0&0&\frac{q-1}{q}&-\frac{1}{q}
\end{pmatrix}.$$
Hence,
\begin{equation}
\label{eq4.2}
C(x,y,z,w)=C\left((\upsigma^{-1})^t(x),(\upsigma^{-1})^t(y),(\upsigma^{-1})^t(z),(\upsigma^{-1})^t(w)\right)=\upsigma\cdot C(x,y,z,w).
\end{equation}
Suppose that $G=\ideal{\upsigma}$ denotes the cyclic subgroup of $\GL_4(\C)$ generated by $\upsigma$ and $V$ denotes 
the standard $4$-dimensional representation of $G$ over $\C$. We write $\{x,y,z,w\}$ for the dual basis of $V^*$. Thus
$\C[V]=\C[x,y,z,w]$ and it follows from (\ref{eq4.2}) that 
$$C(x,y,z,w)\in\C[V]^G.$$
In other words, to describe the general shape of $C(x,y,z,w)$, we only need to compute the invariant ring $\C[V]^G.$

\begin{lem}\label{lem4.1}
Let $H=\ideal{\upsigma^2}$ be the subgroup of $G$ generated by $\upsigma^2$. Then $H$ is a normal subgroup of $G$ and 
$G/H\cong S_2.$
\end{lem}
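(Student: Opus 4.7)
My plan is to prove the two claims separately: normality of $H$ in $G$ is immediate from abelianness, while the quotient isomorphism reduces to showing that the index $[G:H]$ equals $2$.

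First, because $G = \langle \sigma \rangle$ is cyclic, it is abelian, and every subgroup of an abelian group is normal. In particular $H = \langle \sigma^2 \rangle \trianglelefteq G$ with no further work.

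For the isomorphism $G/H \cong S_2$, I would first determine the order of $\sigma$. A direct block computation (exactly as already used implicitly in the paragraph preceding the lemma) gives $\sigma^2 = \dia\{1/q, 1/q, q, q\}$. Since $q = p^s \geq 2$, the eigenvalues $1/q$ and $q$ are real numbers different from any root of unity, hence $\sigma^2$ has infinite order in $\GL_4(\C)$. Consequently $\sigma$ itself has infinite order, and the map $k \mapsto \sigma^k$ yields an isomorphism $G \cong (\Z,+)$.

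Under this identification, $H = \langle \sigma^2 \rangle$ corresponds precisely to the subgroup $2\Z \subset \Z$, so $G/H \cong \Z/2\Z$, which is isomorphic to the symmetric group $S_2$. Equivalently, one can argue without passing through $\Z$: the cosets $H$ and $\sigma H$ are distinct (otherwise $\sigma = \sigma^{2k}$ for some $k \in \Z$, forcing $\sigma$ to have finite order, contradicting the eigenvalue observation above), and they exhaust $G$ since any $\sigma^m$ lies in $H$ or in $\sigma H$ according to the parity of $m$. Therefore $[G:H] = 2$, and the unique group of order $2$ is $S_2$.

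The only mildly delicate point is verifying that $\sigma$ (equivalently $\sigma^2$) has infinite order, so that $H$ is a proper subgroup of $G$ rather than all of $G$; this is handled once and for all by the eigenvalue argument. Everything else is formal.
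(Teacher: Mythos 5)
Your proof is correct and follows essentially the same route as the paper: normality is immediate from commutativity, and the quotient is identified with $S_2$ via the parity of the exponent of $\upsigma$. The one substantive difference is that you explicitly verify that $\upsigma$ has infinite order (from $\upsigma^2=\dia\{1/q,1/q,q,q\}$ with $q\geqslant 2$ not a root of unity), which is precisely what makes the parity map well defined and guarantees $[G:H]=2$; the paper's proof simply declares its map $\upvarphi$ to be ``clearly'' a well-defined surjective homomorphism with kernel $H$, leaving that check implicit. So your version is, if anything, slightly more complete at the only point where the argument could conceivably fail.
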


\begin{proof}
Since $G$ is abelian and $H$ is a subgroup, $H$ is normal. To prove that $G/H\cong S_2$, we define a map $\upvarphi:G\ra S_2=\{0,1\}$ that maps $\upsigma^i$ to $0$ if $i$ is even, and maps $\upsigma^i$ to $1$ if $i$ is odd. Clearly, the map $\upvarphi$ is a surjective group homomorphism with the kernel $H$. Hence, $G/H=G/\ker(\upvarphi)\cong S_2$.
\end{proof}

Let us first compute $\C[V]^H$.

\begin{prop}\label{prop4.2}
$\C[V]^H=\C[x,y,z,w]^H=\C[xz,xw,yz,yw]$.
\end{prop}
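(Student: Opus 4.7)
The plan is to exploit the fact that $\upsigma^2$ acts diagonally on the coordinate functions, so that the invariant ring becomes a combinatorial object spanned by ``balanced'' monomials.

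First, I would pin down the action of $\upsigma^2$ on $V^*$. Using the paper's convention that $\upsigma$ acts on $V^*$ via $(\upsigma^{-1})^t$, the diagonal matrix $\upsigma^2 = \dia\{1/q,1/q,q,q\}$ on $V$ acts on $V^*$ by $\dia\{q,q,1/q,1/q\}$. Hence $\upsigma^2(x)=qx$, $\upsigma^2(y)=qy$, $\upsigma^2(z)=z/q$, $\upsigma^2(w)=w/q$, and therefore $\upsigma^2$ multiplies each monomial $x^ay^bz^cw^d$ by the scalar $q^{(a+b)-(c+d)}$.

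Since $q>1$ is a real number, $q^{(a+b)-(c+d)}=1$ iff $a+b=c+d$. Because $\upsigma^2$ acts as a scalar on every monomial, the fixed subspace $\C[V]^H$ decomposes into monomial pieces, and as a $\C$-vector space it is spanned precisely by the ``balanced'' monomials $x^ay^bz^cw^d$ with $a+b=c+d$. The inclusion $\C[xz,xw,yz,yw]\subseteq \C[V]^H$ is then immediate by checking the four generators directly.

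For the reverse inclusion, I would show that every balanced monomial can be written as a product of the four generators. Given $a+b=c+d=n$, I seek non-negative integers $\alpha,\beta,\gamma,\delta$ with
\begin{equation*}
x^ay^bz^cw^d=(xz)^\alpha(xw)^\beta(yz)^\gamma(yw)^\delta,
\end{equation*}
i.e.\ $\alpha+\beta=a$, $\gamma+\delta=b$, $\alpha+\gamma=c$, $\beta+\delta=d$. Setting $\alpha=\min(a,c)$ (so that $\beta=a-\alpha\geq 0$ and $\gamma=c-\alpha\geq 0$) and then $\delta=d-\beta$, a short case-check using $a+b=c+d$ verifies that $\delta\geq 0$ as well. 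Hence every balanced monomial lies in $\C[xz,xw,yz,yw]$, proving $\C[V]^H\subseteq\C[xz,xw,yz,yw]$ and completing the proposition.

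The only mild subtlety is that $H=\langle\upsigma^2\rangle$ is actually infinite cyclic (since $q>1$ forces $\upsigma^2$ to have infinite order), so Noether's finiteness theorem does not apply off the shelf; but this causes no trouble because the invariant ring of a diagonal action is governed by the character lattice rather than the order of the group, and the combinatorial argument above goes through unchanged.
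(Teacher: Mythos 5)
Your proof is correct and follows essentially the same route as the paper's: diagonalize the action of $\upsigma^2$ on $V^*$, observe that $\C[V]^H$ is spanned by the balanced monomials $x^ay^bz^cw^d$ with $a+b=c+d$, and factor each such monomial as a product of $xz,xw,yz,yw$. The only cosmetic difference is that you solve the factorization system in one shot via $\alpha=\min(a,c)$ whereas the paper peels off a single quadratic factor and inducts on degree; your side remark that $H$ is infinite cyclic (so the argument rests on the monomial eigenspace decomposition rather than Noether finiteness) is a correct point the paper leaves implicit.
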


\begin{proof}
Note that $\left(\upsigma^{-2}\right)^t=\dia\left\{q,q,\frac{1}{q},\frac{1}{q}\right\}$, thus the action of $\upsigma^2$ on $V^*$ is given by $$x\mapsto q\cdot x, y\mapsto q\cdot y, z\mapsto \frac{1}{q}\cdot z\textrm{ and }w\mapsto \frac{1}{q}\cdot w.$$
Clearly, $xz,xw,yz,yw$ are $H$-invariant. 

To prove that $\C[x,y,z,w]^H$ is generated by these four invariants, we first note that 
$\upsigma^2$ fixes any monomial $x^iy^jz^rw^t$, up to a nonzero scalar. This means that 
$\C[x,y,z,w]^H$ is generated by finitely many invariant monomials in $x,y,z,w$.
Consider an arbitrary monomial $f=x^iy^jz^rw^t\in \C[x,y,z,w]^H$ for $i,j,r,t\in\N$. Then
$$x^iy^jz^rw^t=f=\upsigma^2\cdot f=q^{i+j-r-t} \cdot x^iy^jz^rw^t$$
which implies that $q^{i+j-r-t}=1$ and thus 
\begin{equation}
\label{ }
i+j=r+t.
\end{equation}
This also means that the degree of $f$ must be even and we may assume that $\deg(f)\geqslant 2$. Apparently,  
$\{xz,xw,yz,yw\}$ spans the vector space of all invariants of degree $2$.

We use induction on the degree of $f$ to prove that $f$ is a polynomial in $xz,xw,yz,yw$. Suppose that $\deg(f)=2(i+j)\geqslant 4$.
Then $i+j\geqslant 2$. At least one element of $\{i,j\}$ is greater than or equal to $1$. The same statement holds for $\{r,t\}$. 
Without loss of generality, we may assume that $i,r\geqslant 1$. Then
$$f=x^iy^jz^rw^t=(xz)\cdot x^{i-1}y^jz^{r-1}w^t.$$
Note that $x^{i-1}y^jz^{r-1}w^t=\frac{f}{xz}$ is an $H$-invariant of degree $<\deg(f)$. 
Applying the induction hypothesis, we see that $x^{i-1}y^jz^{r-1}w^t\in \C[xz,xw,yz,yw]$. Hence,
$$f=x^iy^jz^rw^t=(xz)\cdot x^{i-1}y^jz^{r-1}w^t\in \C[xz,xw,yz,yw].$$
This proves that $\C[x,y,z,w]^H=\C[xz,xw,yz,yw]$.
\end{proof}

We define
\begin{equation}
\label{ }
U_1:=xz,~~U_2:=yw,~~V_1:=xw,~~V_2:=yz.
\end{equation}
Clearly, $U_1U_2-V_1V_2=0$. Moreover, this relation is the only relation among these generators. In other words,
 $\C[V]^H$ is a hypersurface of Krull dimension $3$. By Proposition \ref{prop4.2}, together with the fact $G/H$ is a finite group, we immediately derive 

\begin{coro}\label{coro4.3}
$\C[V]^G=\C[U_1,U_2,V_1,V_2]^{G/H}$ is of Krull dimension $3$.
\end{coro}

As $\C[U_1,U_2,V_1,V_2]$ is not polynomial, we use the technique appeared in \cite[Chapter 14, page 211]{CW11}
to compute $\C[V]^G$. More precisely, we introduce four variables $u_1,u_2,v_1,v_2$ of degree $1$ that correspond to
$U_1,U_2,V_1,V_2$ in $\C[x,y,w,z]$, respectively. 
Note that the action of $G/H$ on $\{U_1,U_2,V_1,V_2\}$ is stable. Thus this  induces an action of $G/H$ on  $\{u_1,u_2,v_1,v_2\}$ and so an action on the polynomial ring $\C[u_1,u_2,v_1,v_2]$. 
Since $G/H$ is linearly reductive, there exists a natural $G/H$-equivariant algebra surjection:
\begin{equation}
\label{eq4.5}
\uprho: \C[u_1,u_2,v_1,v_2]^{G/H}\ra \C[U_1,U_2,V_1,V_2]^{G/H}=\C[V]^G
\end{equation}
defined by sending $u_i\mapsto U_i$ and $v_i\mapsto V_i$.

We choose $\upsigma$ as the nontrivial left coset representative of $G$ over $H$. Recall that
 the action of $\upsigma$ on $V^*=\ideal{x,y,z,w}$ is given by
$$\upsigma: x\mapsto x+(q-1)\cdot y~, y\mapsto x-y~, z\mapsto  \frac{1}{q}\cdot z+\frac{q-1}{q}\cdot w~, w\mapsto  \frac{1}{q}\cdot z-\frac{1}{q}\cdot w.$$
This  induces an action of $G/H=\ideal{\upsigma H}$ on $\C[u_1,u_2,v_1,v_2]$ defined by
\begin{eqnarray*}
u_1 & \mapsto & \frac{1}{q} \cdot u_1+    \frac{(q-1)^2}{q} \cdot u_2+  \frac{q-1}{q}\cdot v_1+  \frac{q-1}{q}\cdot v_2 \\
u_2 & \mapsto & \frac{1}{q} \cdot u_1+    \frac{1}{q} \cdot u_2-  \frac{1}{q}\cdot v_1-  \frac{1}{q}\cdot v_2 \\
v_1 & \mapsto & \frac{1}{q} \cdot u_1+    \frac{1-q}{q} \cdot u_2-  \frac{1}{q}\cdot v_1+ \frac{q-1}{q}\cdot v_2 \\
v_2 & \mapsto & \frac{1}{q} \cdot u_1+    \frac{1-q}{q} \cdot u_2+ \frac{q-1}{q}\cdot v_1-  \frac{1}{q}\cdot v_2.
\end{eqnarray*}
We write $[\upsigma]$ for the resulting matrix of $\upsigma H$ on the vector space spanned by $\{u_1,u_2,v_1,v_2\}$. Then
$$[\upsigma]=\begin{pmatrix}
   \frac{1}{q}   & \frac{1}{q}& \frac{1}{q}& \frac{1}{q}\\
   \frac{(q-1)^2}{q}    &\frac{1}{q}  &\frac{1-q}{q}&\frac{1-q}{q}\\
   \frac{q-1}{q}   &-\frac{1}{q} &-\frac{1}{q} &\frac{q-1}{q}  \\
    \frac{q-1}{q}  &-\frac{1}{q}  &\frac{q-1}{q}&-\frac{1}{q}
\end{pmatrix}.$$
Note that $[\upsigma]$ is of order $2$ and $([\upsigma]^{-1})^t$ is similar with  $\uptau:=\dia\{1,1,-1,-1\}$. More precisely,
we define 
$$T:=\begin{pmatrix}
   \frac{q+1}{q}  &  \frac{(q-1)^2}{q} &\frac{q-1}{q}&\frac{q-1}{q}  \\
    1  & 1-q&q-1&q-1\\
   \frac{1-q}{q} &\frac{(q-1)^2}{q}&\frac{q-1}{q}&\frac{q-1}{q}\\
   1 &1-q&-q-1&q-1
\end{pmatrix}.$$
A direct verification shows that
\begin{equation}
\label{ }
[\upsigma]=T^{-1}\cdot \uptau\cdot T.
\end{equation}
Hence, the subgroup $N$ generated by $(\uptau^{-1})^t$ in $\GL_4(\C)$ and the subgroup generated by
 $\upsigma H$ in $G$ give rise to two equivalent representations of $G/H\cong S_2$, respectively.

Hence, by  Algorithm \ref{algm}, we may first compute $\C[u_1,u_2,v_1,v_2]^{N}$
and use the matrix $T$ to transfer the generating set of $\C[u_1,u_2,v_1,v_2]^{N}$ to a 
generating set of $\C[u_1,u_2,v_1,v_2]^{G/H}$. 

\begin{prop}\label{prop4.4}
The invariant ring $\C[u_1,u_2,v_1,v_2]^{N}=\C[u_1,u_2, v_1^2,v_2^2,v_1v_2]$ is hypersurface, subject to  the unique relation:
$$(v_1v_2)^2=v_1^2\cdot v_2^2.$$
\end{prop}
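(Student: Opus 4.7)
The plan is to work out the action of $N$ on $\C[u_1,u_2,v_1,v_2]$ explicitly and then run a short monomial argument, followed by a standard dimension and irreducibility check for the defining relation. Since $\uptau=\dia\{1,1,-1,-1\}$ is diagonal and self-inverse, $(\uptau^{-1})^t=\uptau$, so the nontrivial element of $N$ fixes $u_1,u_2$ and sends $v_i\mapsto -v_i$ for $i=1,2$. Consequently a monomial $u_1^a u_2^b v_1^c v_2^d$ is $N$-invariant if and only if $c+d$ is even, and the five polynomials $u_1,u_2,v_1^2,v_2^2,v_1v_2$ are visibly invariant and satisfy $(v_1v_2)^2=v_1^2\cdot v_2^2$.

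Next I will show that these five polynomials generate $\C[u_1,u_2,v_1,v_2]^N$. Because the action is diagonal, the invariant ring is spanned as a $\C$-vector space by invariant monomials. Given such a monomial $u_1^a u_2^b v_1^c v_2^d$ with $c+d$ even, if both $c$ and $d$ are even it is already a product of $u_1, u_2, v_1^2, v_2^2$; otherwise both are odd and factoring out one copy of $v_1 v_2$ reduces to the previous case. This yields $\C[u_1,u_2,v_1,v_2]^N=\C[u_1,u_2,v_1^2,v_2^2,v_1v_2]$.

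Finally I will show that $(v_1v_2)^2-v_1^2\cdot v_2^2$ generates the full ideal of syzygies. Consider the surjective $\C$-algebra map $\upvarphi:\C[X_1,X_2,Y_1,Y_2,Z]\twoheadrightarrow \C[u_1,u_2,v_1^2,v_2^2,v_1v_2]$ sending $X_i\mapsto u_i$, $Y_i\mapsto v_i^2$, $Z\mapsto v_1v_2$. Viewed as a quadratic in $Z$ over the UFD $\C[X_1,X_2,Y_1,Y_2]$, the polynomial $F:=Z^2-Y_1Y_2$ is irreducible, since $Y_1Y_2$ is not a square in $\C[X_1,X_2,Y_1,Y_2]$; hence $(F)$ is a height-$1$ prime. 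On the other hand $u_1,u_2,v_1^2,v_2^2$ are algebraically independent in the polynomial ring $\C[u_1,u_2,v_1,v_2]$ and $v_1v_2$ is integral over the subring they generate, so the target is a domain of transcendence degree $4$ over $\C$. Therefore $\ker(\upvarphi)$ is a height-$1$ prime containing $(F)$, forcing $\ker(\upvarphi)=(F)$, which establishes the hypersurface statement.

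I do not anticipate a real obstacle: the monomial analysis is elementary and the irreducibility-plus-dimension argument for the relation is standard. A direct alternative avoiding dimension theory would be a normal-form argument: every class in $\C[X_1,X_2,Y_1,Y_2,Z]/(F)$ has a unique representative of the shape $A(X_1,X_2,Y_1,Y_2)+Z\cdot B(X_1,X_2,Y_1,Y_2)$, and its image $A(u_1,u_2,v_1^2,v_2^2)+v_1v_2\cdot B(u_1,u_2,v_1^2,v_2^2)$ splits into a sum of monomials of even total $v$-degree plus a sum of monomials of odd total $v$-degree. These two sums live in complementary $\C$-subspaces of $\C[u_1,u_2,v_1,v_2]$, so vanishing forces $A=B=0$.
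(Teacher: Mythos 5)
Your proof is correct, and it is somewhat more self-contained than the one in the paper. Both arguments begin from the same observation: the nontrivial element of $N$ acts diagonally, fixing $u_1,u_2$ and negating $v_1,v_2$, so the invariant ring is spanned by the monomials $u_1^au_2^bv_1^cv_2^d$ with $c+d$ even. From there the paper invokes Noether's degree bound to conclude that generators occur in degree at most $|N|=2$ and then lists the invariant monomials of degree $\leqslant 2$, whereas you establish generation by the elementary parity-and-factoring induction (either both exponents $c,d$ are even, or both are odd and a factor $v_1v_2$ splits off); your route avoids citing Noether's theorem at the cost of a few extra lines. The more substantive difference is that the paper simply asserts the hypersurface property, while you actually verify that the kernel of the presentation $\C[X_1,X_2,Y_1,Y_2,Z]\twoheadrightarrow\C[u_1,u_2,v_1^2,v_2^2,v_1v_2]$ equals $(Z^2-Y_1Y_2)$, either by comparing a height-one prime against the transcendence degree of the image, or by the normal form $A+ZB$ whose even and odd $v$-degree pieces lie in complementary subspaces of $\C[u_1,u_2,v_1,v_2]$. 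Both justifications are sound (the normal-form one is the cleaner of the two, as it needs no dimension theory), and this is precisely the part of the statement that the paper's proof leaves implicit.
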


\begin{proof}
Note that $(\uptau^{-1})^t=\uptau=\dia\{1,1,-1,-1\}$, which fixes $u_i$ and maps $v_i$ to $-v_i$ for $i\in\{1,2\}$.
By Noether's bound theorem (see for example, \cite[Theorem 3.5.1]{CW11}), $\C[u_1,u_2,v_1,v_2]^{N}$ can be generated
by homogeneous polynomials of degree at most $|N|=2$. Thus, it suffices to consider an invariant monomial 
$f=v_1^iv_2^j$ with $i+j=2$. The three partitions of $2$: $(2,0), (0,2),$ and $(1,1)$ produce three invariant monomials:
$v_1^2,v_2^2,v_1v_2$, respectively. Therefore, $\C[u_1,u_2,v_1,v_2]^{N}$ can be generated by
$\{u_1,u_2, v_1^2,v_2^2,v_1v_2\}$.
\end{proof}

Combining Proposition \ref{prop4.4} and   Algorithm \ref{algm}, we obtain 

\begin{coro}\label{coro4.5}
$\C[u_1,u_2,v_1,v_2]^{G/H}$ is minimally generated by the following five polynomials:
\begin{eqnarray*}
f_1 &:= &  \frac{q+1}{q}\cdot u_1  +  \frac{(q-1)^2}{q}\cdot u_2 +\frac{q-1}{q}\cdot v_1+\frac{q-1}{q} \cdot v_2 \\
f_2 &: = & u_1+ (1-q)\cdot u_2+(q-1)\cdot v_1+(q-1)\cdot v_2\\ 
f_3 &:=& \left(\frac{1-q}{q}\cdot u_1  +  \frac{(q-1)^2}{q}\cdot u_2 +\frac{q-1}{q}\cdot v_1+\frac{q-1}{q} \cdot v_2\right)^2\\
f_4 &:=& (u_1+ (1-q)\cdot u_2-(q+1)\cdot v_1+(q-1)\cdot v_2)^2\\
f_5 &:=& \left(\frac{1-q}{q}\cdot u_1  +  \frac{(q-1)^2}{q}\cdot u_2 +\frac{q-1}{q}\cdot v_1+\frac{q-1}{q} \cdot v_2\right)\cdot\\
&&(u_1+ (1-q)\cdot u_2-(q+1)\cdot v_1+(q-1)\cdot v_2)
\end{eqnarray*}
subject to the unique relation: $f_5^2-f_3f_4=0$.
\end{coro}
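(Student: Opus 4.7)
The plan is to apply \cite[Algorithm 3.5]{CZ24} once more, transferring the generating set of $\C[u_1,u_2,v_1,v_2]^{N}$ obtained in Proposition \ref{prop4.4} to a generating set of $\C[u_1,u_2,v_1,v_2]^{\ideal{\upsigma H}}$ via the intertwining matrix $T$ satisfying $[\upsigma]=T^{-1}\cdot \uptau\cdot T$. The conjugation identity says that $\ideal{\upsigma H}$ and $N$ act on the same $4$-dimensional space as two equivalent representations of $G/H\cong S_2$, so their invariant subrings are isomorphic as $\N$-graded $\C$-algebras. Algorithm 3.5 exhibits this isomorphism explicitly: the substitution whose coefficient matrix is $T$ sends $u_1$ to the linear form read off from the first row of $T$, $u_2$ to the second-row form, $v_1$ to the third-row form, and $v_2$ to the fourth-row form.

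Next, I would simply substitute in the five generators $\{u_1,u_2,v_1^2,v_2^2,v_1v_2\}$ of $\C[u_1,u_2,v_1,v_2]^{N}$ provided by Proposition \ref{prop4.4}. Writing $\ell_3$ and $\ell_4$ for the linear forms produced by the third and fourth rows of $T$, the first two substitutions yield exactly $f_1$ and $f_2$, and the last three produce $\ell_3^2=f_3$, $\ell_4^2=f_4$, and $\ell_3\ell_4=f_5$. Since the substitution is a graded $\C$-algebra isomorphism, the transferred set $\{f_1,f_2,f_3,f_4,f_5\}$ still minimally generates the target ring, and the defining relation $(v_1v_2)^2=v_1^2\cdot v_2^2$ translates term-by-term into $f_5^2-f_3f_4=0$.

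The remaining step is to justify that no further relations appear, so that $\C[u_1,u_2,v_1,v_2]^{\ideal{\upsigma H}}$ is a genuine hypersurface. This follows from dimension counting: the action of the finite group $\ideal{\upsigma H}$ on a $4$-dimensional space forces the Krull dimension of the invariant ring to be $4$, which matches the dimension of $\C[y_1,\ldots,y_5]/(y_5^2-y_3y_4)$. Equivalently, the isomorphism induced by $T$ pulls the hypersurface presentation of Proposition \ref{prop4.4} back to the desired presentation here.

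The main obstacle is verifying the underlying claim of Algorithm 3.5 that substitution by $T$ genuinely carries $N$-invariants to $\ideal{\upsigma H}$-invariants and vice versa; but this reduces to the straightforward observation that if $g\in \C[u_1,u_2,v_1,v_2]$ satisfies $\uptau\cdot g=g$, then the conjugation $[\upsigma]=T^{-1}\uptau T$ together with the contragredient convention $\upsigma\cdot f(\mathbf{v})=f(\upsigma^{-1}\mathbf{v})$ immediately implies that the substituted polynomial is fixed by $\upsigma H$. Once this is in hand, the rest of the argument is purely formal, matching the structure already used in the proof of Theorem \ref{thm3.2}.
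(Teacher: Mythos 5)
Your proposal is correct and follows exactly the route the paper takes: the paper's entire ``proof'' of Corollary \ref{coro4.5} is the single sentence combining Proposition \ref{prop4.4} with \cite[Algorithm 3.5]{CZ24}, and your substitution $g(\mathbf{u})\mapsto g(T\mathbf{u})$ (rows of $T$ replacing $u_1,u_2,v_1,v_2$), applied to $\{u_1,u_2,v_1^2,v_2^2,v_1v_2\}$, reproduces $f_1,\dots,f_5$ and carries the relation $(v_1v_2)^2=v_1^2v_2^2$ to $f_5^2=f_3f_4$ precisely as intended. You in fact supply more justification than the paper does (the invariance check via $T[\upsigma]=\uptau T$ and the Krull-dimension argument for why no further relations arise), all of which is sound.
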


Together Corollary \ref{coro4.3}, Corollary \ref{coro4.5}, and the surjective map $\uprho$ in (\ref{eq4.5}) gives 

\begin{coro}\label{coro4.6}
$\C[V]^G=\C[x,y,z,w]^{G}$ is  minimally generated by
\begin{eqnarray*}
g_1 &:= &  \frac{q+1}{q}\cdot xz  +  \frac{(q-1)^2}{q}\cdot yw +\frac{q-1}{q}\cdot xw+\frac{q-1}{q} \cdot yz \\
g_2 &: = & xz+ (1-q)\cdot yw+(q-1)\cdot xw+(q-1)\cdot yz\\ 
g_3 &:=& \left(\frac{1-q}{q}\cdot xz  +  \frac{(q-1)^2}{q}\cdot yw +\frac{q-1}{q}\cdot xw+\frac{q-1}{q} \cdot yz\right)^2\\
g_4 &:=& (xz+ (1-q)\cdot yw-(q+1)\cdot xw+(q-1)\cdot yz)^2\\
g_5 &:=& \left(\frac{1-q}{q}\cdot xz  +  \frac{(q-1)^2}{q}\cdot yw +\frac{q-1}{q}\cdot xw+\frac{q-1}{q} \cdot yz\right)\cdot\\
&&(xz+ (1-q)\cdot yw-(q+1)\cdot xw+(q-1)\cdot yz).
\end{eqnarray*}
\end{coro}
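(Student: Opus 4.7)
The plan is to deduce Corollary \ref{coro4.6} directly from the two preceding results, so essentially no fresh invariant-theoretic work is needed; the proof is a controlled substitution followed by a short check that minimality and the relation survive the substitution.

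First I would appeal to Corollary \ref{coro4.3}, which identifies $\C[V]^G$ with $\C[xz,xw,yz,yw]^{G/H}$. Since $G/H \cong S_2$ is generated by $\upsigma H$, this invariant ring coincides with $\C[u_1,u_2,v_1,v_2]^{\ideal{\upsigma H}}$ under the bijective substitution
\[
\upvarphi:\ u_1\mapsto xz,\quad u_2\mapsto yw,\quad v_1\mapsto xw,\quad v_2\mapsto yz.
\]
Corollary \ref{coro4.5} then provides a hypersurface presentation of the target ring with five explicit generators $f_1,\dots,f_5$ and the single relation $f_5^2-f_3f_4=0$. Applying $\upvarphi$ to each $f_i$ produces exactly the polynomials $g_1,\dots,g_5$ as listed, and $\upvarphi(f_5^2-f_3f_4)=g_5^2-g_3g_4$. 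This yields the presentation claimed in the corollary.

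The one thing that needs a brief argument is that the substitution $\upvarphi$ really transports both the minimality of the generating set and the uniqueness of the syzygy. For this, I would observe that $u_1,u_2,v_1,v_2$ are algebraically independent in $\C[x,y,z,w]$: any nontrivial polynomial relation $P(u_1,u_2,v_1,v_2)=P(xz,yw,xw,yz)=0$ would force $P=0$ by a direct weight argument on the bigrading $(\deg_{x,y},\deg_{z,w})$. Consequently the $\C$-algebra map $\C[u_1,u_2,v_1,v_2]\to\C[xz,xw,yz,yw]$ is an isomorphism of graded rings, so it restricts to an isomorphism on the $\ideal{\upsigma H}$-invariant subrings. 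In particular, the five generators remain a minimal homogeneous generating set (any subset that generated would pull back to one for the $f_i$'s, contradicting Corollary \ref{coro4.5}), and the kernel of the presentation map stays principal, generated by the image of $f_5^2-f_3f_4$.

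The only potential obstacle is verifying algebraic independence of $\{xz,xw,yz,yw\}$, and this is genuinely routine: the monomial basis of $\C[xz,xw,yz,yw]$ inside $\C[x,y,z,w]$ is visibly free because each monomial $(xz)^a(xw)^b(yz)^c(yw)^d$ equals $x^{a+b}y^{c+d}z^{a+c}w^{b+d}$, and the map $(a,b,c,d)\mapsto(a+b,c+d,a+c,b+d)$ from $\N^4$ to $\N^4$ is injective. Once this is in hand, the rest of the proof is a one-line invocation of Corollary \ref{coro4.5} under $\upvarphi$.
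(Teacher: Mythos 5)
Your overall route---substitute $u_1\mapsto xz$, $u_2\mapsto yw$, $v_1\mapsto xw$, $v_2\mapsto yz$ into Corollary \ref{coro4.5} and transport the presentation---is exactly the paper's: the paper states Corollary \ref{coro4.6} as an immediate consequence of Corollaries \ref{coro4.3} and \ref{coro4.5}. However, the one step you correctly single out as needing verification is precisely the step that fails. The elements $xz$, $xw$, $yz$, $yw$ are \emph{not} algebraically independent in $\C[x,y,z,w]$, because $(xz)(yw)=(xw)(yz)=xyzw$. Equivalently, your map $(a,b,c,d)\mapsto(a+b,\,c+d,\,a+c,\,b+d)$ is not injective: $(1,0,0,1)$ and $(0,1,1,0)$ both map to $(1,1,1,1)$. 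Hence the substitution $\upvarphi\colon\C[u_1,u_2,v_1,v_2]\to\C[xz,xw,yz,yw]$ has nontrivial kernel containing $u_1u_2-v_1v_2$, it is not an isomorphism of graded rings, and your argument that minimality of the generating set and principality of the syzygy ideal are preserved collapses at that point.

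This is a substantive gap, not a cosmetic one. Since $\upsigma^2=\dia\{1/q,1/q,q,q\}$ has infinite order, $\C[V]^H=\C[xz,xw,yz,yw]$ is the coordinate ring of the affine cone over a Segre quadric and has Krull dimension $3$, not $4$; consequently $\C[V]^G$ also has dimension $3$, and a presentation by the five generators $g_1,\dots,g_5$ must have a defining ideal of height $2$, so one principal relation cannot suffice. A correct argument therefore has to work in $\C[u_1,u_2,v_1,v_2]/(u_1u_2-v_1v_2)$ throughout and track how the Segre relation interacts with the $S_2$-invariants $f_1,\dots,f_5$, i.e.\ what additional relation among $g_1,\dots,g_5$ it induces beyond $g_5^2-g_3g_4=0$. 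Your proposal does not address this (and the paper's one-line derivation elides the same point), so as written the proof does not establish the corollary.
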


\begin{rem}{\rm
Note that $\C[V]^G$ is of Krull dimension $3$ while it is generated by $5$ invariant polynomials. Thus it is not hypersurface.
\hbo}\end{rem}

We close this section with the following description on the double weight enumerator of a formally self-dual quantum code.

\begin{coro}\label{coro4.8}
Let $C(x,y,z,w)$ be the double weight enumerator of a formally self-dual quantum code $Q$ of the type $((n,K))_q$. Then $C(x,y,z,w)$ is a polynomial in $g_1,g_2,\dots,g_5$. 
\end{coro}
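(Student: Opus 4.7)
The plan is to reduce Corollary \ref{coro4.8} to the invariant-ring description already obtained in Corollary \ref{coro4.6}. Concretely, I would first show that $C(x,y,z,w)$ lies in $\C[V]^G$, and then invoke the explicit generating set $\{g_1,\dots,g_5\}$ of that ring.

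For the first step, I would start from the MacWilliams identity (\ref{Mac2}). Since $Q$ is formally self-dual, (\ref{eq2.16}) gives $C^\bot(x,y,z,w)=K\cdot C(x,y,z,w)$, and substituting into (\ref{Mac2}) yields
$$C(x,y,z,w)=C\left(x+(q-1)y,\,x-y,\,\frac{z+(q-1)w}{q},\,\frac{z-w}{q}\right),$$
which is precisely the identity opening Section \ref{sec4}. Reading this through the standard invariant-theory convention (the action of $\upsigma$ on $\C[V]=\C[x,y,z,w]$ is implemented by applying $(\upsigma^{-1})^t$ to the dual basis) one obtains $\upsigma\cdot C(x,y,z,w)=C(x,y,z,w)$, as already recorded in (\ref{eq4.2}). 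Since $G=\ideal{\upsigma}$, this places $C(x,y,z,w)$ inside $\C[V]^G$.

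For the second step, I would simply apply Corollary \ref{coro4.6}, which states that $\C[V]^G$ is a hypersurface minimally generated by $\{g_1,g_2,g_3,g_4,g_5\}$. Hence $C(x,y,z,w)$, as an element of $\C[V]^G$, is expressible as a polynomial in these five invariants, which is the desired conclusion.

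I do not expect a serious obstacle here, since the substantive algebraic work was carried out in Propositions \ref{prop4.2} and \ref{prop4.4} and the transfer through the matrix $T$ leading to Corollary \ref{coro4.5}. The one point deserving a moment's care is the passage from the substitution form of the MacWilliams identity to the statement that $\upsigma$ fixes $C(x,y,z,w)$; this verification was already performed immediately before (\ref{eq4.2}) and need only be cited in the write-up.
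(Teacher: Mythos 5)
Your proposal is correct and matches the paper's (largely implicit) argument exactly: the identity $C(x,y,z,w)=\upsigma\cdot C(x,y,z,w)$ is established at the opening of Section \ref{sec4} via (\ref{Mac2}) and (\ref{eq2.16}), placing $C$ in $\C[V]^G$, and Corollary \ref{coro4.6} then yields the expression in $g_1,\dots,g_5$. This is also the same two-step pattern used in the proof of Corollary \ref{coro3.4}, so no further comment is needed.
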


\section{Complete Weight Enumerators of Self-dual Quantum Codes}\label{sec5}
\setcounter{equation}{0}
\renewcommand{\theequation}
{5.\arabic{equation}}
\setcounter{theorem}{0}
\renewcommand{\thetheorem}
{5.\arabic{theorem}}

\noindent Our experience in computational invariant theory shows that the complexity of an invariant ring $k[V]^G$ usually depends upon its Krull dimension, i.e., the dimension of $V$ as a $k$-vector space. 
Although computing high-dimensional invariant rings is more complicated than working with low-dimensional ones, studying the low-dimensional cases often provides significant insights into the structure of high-dimensional invariant rings; see for example, \cite{Che14, Che18, Che21, Ren24} and \cite{CR26}. 

This last section provides two explicit examples, exploring the complete weight enumerators of  formally self-dual quantum codes for $q=2$ and $3$, and demonstrating how difficult to calculate all complete weight enumerators of formally self-dual quantum codes.

Let us begin with the MacWilliams identity (\ref{Mac3}), for which we see that the complete weight enumerator $D(M)$
of a formally self-dual quantum code $Q$ of type $((n,K))_q$ satisfies the following equation:
\begin{equation}
\label{eq5.1}
D(M)=\frac{1}{K}\cdot D(M^\bot)
\end{equation}
where $M=(M_{\uplambda,\upmu})$ and $M^\bot=(M_{\uplambda',\upmu'}^\bot)$ denote
$q\times q$-matrices with entries 
$$M_{\uplambda',\upmu'}^\bot=\frac{1}{q}\cdot\sum_{\uplambda,\upmu\in\{0,1,\dots,q-1\}}
\upzeta_p^{\tr(\upalpha_{\uplambda'}\cdot \upalpha_{\upmu}-\upalpha_{\uplambda}\cdot \upalpha_{\upmu'})} M_{\uplambda,\upmu}.$$

\subsection{Example 1: $q=2$}

In this binary case, $p=q=2$ and $\upzeta_2=-1$. We assume that $\F_q=\{\upalpha_0=0,\upalpha_1=1\}$. Note that
$$M=\begin{pmatrix}
   M_{00}   & M_{01}   \\
   M_{10}   & M_{11} 
\end{pmatrix}.$$
To use $M_{\uplambda,\upmu}$ to express $M_{\uplambda,\upmu}^\bot$, we need the fundamental property of
the trace map:
\begin{equation}
\label{ }
\tr(a)=s\cdot a
\end{equation}
for all $a\in\F_p$ and where $q=p^s$; see \cite[Theorem 7.12 (iii)]{Wan12}. Hence,
\begin{eqnarray*}
M_{00}^\bot  & = & \frac{1}{2}\cdot\sum_{\uplambda,\upmu\in\{0,1\}}(-1)^{\tr(0)}\cdot M_{\uplambda,\upmu}= \frac{1}{2}\cdot\left(M_{00}+M_{01}+M_{10}+M_{11}\right) \\
M_{01}^\bot  & = & \frac{1}{2}\cdot\sum_{\uplambda,\upmu\in\{0,1\}}(-1)^{\tr(-\upalpha_\uplambda)}\cdot M_{\uplambda,\upmu}= \frac{1}{2}\cdot\left(M_{00}+M_{01}-M_{10}-M_{11}\right) \\
M_{10}^\bot  & = & \frac{1}{2}\cdot\sum_{\uplambda,\upmu\in\{0,1\}}(-1)^{\tr(\upalpha_\upmu)}\cdot M_{\uplambda,\upmu}= \frac{1}{2}\cdot\left(M_{00}-M_{01}+M_{10}-M_{11}\right) \\
M_{11}^\bot  & = & \frac{1}{2}\cdot\sum_{\uplambda,\upmu\in\{0,1\}}(-1)^{\tr(\upalpha_\upmu-\upalpha_\uplambda)}\cdot M_{\uplambda,\upmu}= \frac{1}{2}\cdot\left(M_{00}-M_{01}-M_{10}+M_{11}\right).
\end{eqnarray*}
We define
\begin{equation}
\label{ }
\upsigma:=\frac{1}{2}\cdot\begin{pmatrix}
    1  & 1  & 1&1\\
    1  & 1 &-1&-1\\
    1&-1&1&-1\\
    1&-1&-1&1
\end{pmatrix}.
\end{equation}
Then $\upsigma^2=I_4$ and $\upsigma^t=\upsigma$. Thus it follows from (\ref{eq5.1}) that
$$
D(M_{00},M_{01},M_{10},M_{11})=D(\upsigma(M_{00}),\upsigma(M_{01}),\upsigma(M_{10}),\upsigma(M_{11}))=\upsigma\cdot
D(M_{00},M_{01},M_{10},M_{11}).
$$
Let $G$ be the cyclic subgroup of $\GL_4(\C)$ generated by $\upsigma$ and $V$ denote the standard representation 
of $G$ over $\C$. Hence, the complete weight enumerator $D(M)$ can be viewed as a polynomial invariant in $\C[V]^G$.
To describe $D(M)$, we only need to find a homogeneous generating set for $\C[V]^G$.

We define $\uptau:=\dia\{1,1,1,-1\}$ and
$$T:=\begin{pmatrix}
    \frac{3}{2}  &  \frac{1}{2}  & \frac{1}{2} &\frac{1}{2} \\
    1  & -1 &3&-1\\
    1&-1&-1&3\\
    1&-1&-1&-1
\end{pmatrix}.$$
One may verify that
$$\upsigma=T^{-1}\cdot \uptau\cdot T.$$
It is easy to see that the invariant ring of $\uptau$ on $M_{00},M_{01},M_{10},M_{11}$ is equal to
\begin{equation}
\label{ }
\C[M_{00},M_{01},M_{10},M_{11}^2]
\end{equation}
which is a polynomial algebra over $\C$. By  Algorithm \ref{algm}, the following polynomials 
\begin{eqnarray*}
f_1 & := &\frac{3}{2}\cdot M_{00}+\frac{1}{2}\cdot M_{01}+\frac{1}{2}\cdot M_{10}+\frac{1}{2}\cdot M_{11} \\
f_2 & := & M_{00}-M_{01}+3\cdot M_{10}-M_{11}\\
f_3 & := & M_{00}-M_{01}-M_{10}+3\cdot M_{11}\\
f_4 & := & (M_{00}-M_{01}-M_{10}-M_{11})^2
\end{eqnarray*}
are $G$-invariant and generate $\C[V]^G$. In fact, we may replace $f_1$ by $\widetilde{f}_1:=2\cdot f_1$
 and obtain another generating set of $\C[V]^G$: $\{\widetilde{f}_1,f_2,f_3,f_4\}$.

This also completes the proof of the following result. 

\begin{thm}\label{thm5.1}
If $D(M)$ is the complete  weight enumerator of a formally self-dual quantum code $Q$ of the type $((n,K))_2$, then $D(M)$ is a polynomial in $\widetilde{f}_1,f_2,f_3,f_4$. 
\end{thm}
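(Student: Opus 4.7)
The plan is to realize $D(M_{\uplambda,\upmu})$ as an invariant polynomial for an explicit finite group acting linearly on a four-dimensional space, and then to apply the transfer method of \cite[Algorithm 3.5]{CZ24} exactly as used in the $q=3$ case and in Sections \ref{sec3}--\ref{sec4}. Since the proof is parallel to those earlier arguments, I would follow the three-step template: (i) read off the group and the representation from the MacWilliams identity, (ii) exhibit an equivalent representation whose invariants are obvious, and (iii) conjugate back.

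First, I would specialize the MacWilliams identity \eqref{Mac3} and the formal self-duality hypothesis to $q=2$. Since $p=2$, one has $\upzeta_2=-1$ and $\Tr(a)=a$ on $\F_2$, so each entry $M^\bot_{\uplambda',\upmu'}$ is a signed linear combination of $M_{00},M_{01},M_{10},M_{11}$ with coefficient $\tfrac{1}{2}$. Reading off the four equations yields exactly the matrix $\upsigma$ displayed in the excerpt. A direct check confirms $\upsigma^2=I_4$, so $G:=\ideal{\upsigma}$ is cyclic of order $2$. Combining \eqref{Mac3} with formal self-duality, the scalar $\tfrac{1}{K}$ is absorbed and one obtains $D(M)=\upsigma\cdot D(M)$, so $D\in\C[V]^G$ where $V$ is the standard four-dimensional representation of $G$ with coordinates $M_{00},M_{01},M_{10},M_{11}$.

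Next, to compute $\C[V]^G$, I would produce an invertible $T\in\GL_4(\C)$ conjugating $\upsigma$ to the simple involution $\uptau:=\dia\{1,1,1,-1\}$; the $T$ given in the excerpt does the job, and the verification $\upsigma=T^{-1}\uptau T$ is a routine $4\times 4$ multiplication. Let $H:=\ideal{\uptau}$ and let $W$ denote the standard representation of $H$ on the same coordinates. Since $\uptau$ fixes the first three coordinates and negates the fourth, Noether's bound (or direct inspection of monomials) gives
\[
\C[W]^H=\C[M_{00},\,M_{01},\,M_{10},\,M_{11}^2],
\]
which is manifestly a polynomial algebra. Applying \cite[Algorithm 3.5]{CZ24} — that is, transferring each generator of $\C[W]^H$ through $T$ back to $V$ — produces the four $G$-invariants $f_1,f_2,f_3,f_4$ written in the excerpt, and they generate $\C[V]^G$ as a polynomial algebra. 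Clearing the fraction in $f_1$ by passing to $\widetilde{f}_1:=2f_1$ leaves a minimal homogeneous generating set $\{\widetilde{f}_1,f_2,f_3,f_4\}$ of $\C[V]^G$.

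Finally, since $D(M_{\uplambda,\upmu})\in\C[V]^G=\C[\widetilde{f}_1,f_2,f_3,f_4]$, it is expressible as a polynomial in those four generators, which is the claim. The only place that requires real care is the diagonalization step: one must both guess the correct $T$ and verify the conjugation identity $\upsigma=T^{-1}\uptau T$. Everything else (the explicit computation of $M^\bot$ for $q=2$, the verification $\upsigma^2=I_4$, and the transfer of the four generators through $T$) is mechanical linear algebra that mirrors the $S_2$-argument already carried out in Section \ref{sec3}.
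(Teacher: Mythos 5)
Your proposal is correct and follows essentially the same route as the paper: read off $\upsigma$ from the $q=2$ specialization of the MacWilliams identity, conjugate it via the given $T$ to $\uptau=\dia\{1,1,1,-1\}$ whose invariant ring is visibly $\C[M_{00},M_{01},M_{10},M_{11}^2]$, and transfer the generators back through $T$ using \cite[Algorithm 3.5]{CZ24} to obtain $\{\widetilde{f}_1,f_2,f_3,f_4\}$. You also correctly note that the factor $\tfrac{1}{K}$ cancels against formal self-duality so that $D$ is a genuine (not merely relative) invariant, which is the one point the paper glosses over.
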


\begin{exam}\label{exam5.2}{\rm
Consider the Bell state code $Q_B$. By Example \ref{exam2.3}, it is a formally self-dual code  with the complete weight enumerator $D(M)=\frac{1}{2}\left(M_{11}^2 + M_{10}^2 + M_{01}^2 + M_{00}^2\right)$. A direct computation shows that
$$(\widetilde{f}_1)^2 - \widetilde{f}_1f_2 - \widetilde{f}_1f_3 + f_2^2 + f_2f_3 + f_3^2 + 2f_4 - 16\cdot D(M)=0$$
which implies that $D(M)$ can be algebraically  expressed by  $\widetilde{f}_1,f_2,f_3,f_4$. 
\hbo}\end{exam}

\subsection{Example 2: $q=3$} In this case, $p=q=3$. Throughout this subsection, we 
write $\upomega$ for $$\upzeta_3=e^{\frac{2\pi\sqrt{-1}}{3}},$$ for the sake of simplicity. Thus, $\upomega^3=1$ and
$$\upomega^2+\upomega+1=0.$$

We assume that 
$\F_q=\{\upalpha_0=0,\upalpha_1=1,\upalpha_2=2\}$. By (\ref{eq5.1}), we see that
\begin{eqnarray*}
M_{00}^\bot  & = & \frac{1}{3}\cdot\sum_{\uplambda,\upmu\in\{0,1,2\}}\upomega^{\tr(0)}\cdot M_{\uplambda,\upmu}\\
&=& \frac{1}{3}\cdot\left(M_{00}+M_{01}+M_{02}+M_{10}+M_{11}+M_{12}+M_{20}+M_{21}+M_{22}\right) \\
M_{01}^\bot  & = & \frac{1}{3}\cdot\sum_{\uplambda,\upmu\in\{0,1,2\}}\upomega^{\tr(2\cdot \upalpha_{\uplambda})}\cdot M_{\uplambda,\upmu}\\
&=& \frac{1}{3}\cdot\left(M_{00}+M_{01}+M_{02}+\upomega^2\cdot M_{10}+\upomega^2\cdot M_{11}+\upomega^2\cdot M_{12}+\upomega\cdot M_{20}+\upomega\cdot M_{21}+\upomega\cdot M_{22}\right) \\
M_{02}^\bot  & = & \frac{1}{3}\cdot\sum_{\uplambda,\upmu\in\{0,1,2\}}\upomega^{\tr(\upalpha_{\uplambda})}\cdot M_{\uplambda,\upmu}\\
&=& \frac{1}{3}\cdot\left(M_{00}+M_{01}+M_{02}+\upomega\cdot M_{10}+\upomega\cdot M_{11}+\upomega\cdot M_{12}+\upomega^2\cdot M_{20}+\upomega^2\cdot M_{21}+\upomega^2\cdot M_{22}\right) \\
M_{10}^\bot  & = & \frac{1}{3}\cdot\sum_{\uplambda,\upmu\in\{0,1,2\}}\upomega^{\tr(\upalpha_{\upmu})}\cdot M_{\uplambda,\upmu}\\
&=& \frac{1}{3}\cdot\left(M_{00}+\upomega\cdot M_{01}+\upomega^2\cdot M_{02}+M_{10}+
\upomega\cdot M_{11}+\upomega^2\cdot M_{12}+M_{20}+\upomega\cdot M_{21}+\upomega^2\cdot M_{22}\right) \\
M_{11}^\bot  & = & \frac{1}{3}\cdot\sum_{\uplambda,\upmu\in\{0,1,2\}}\upomega^{\tr(\upalpha_{\upmu}-a_{\uplambda})}\cdot M_{\uplambda,\upmu}\\
&=& \frac{1}{3}\cdot\left(M_{00}+\upomega\cdot M_{01}+\upomega^2\cdot M_{02}+\upomega^2\cdot M_{10}+
M_{11}+\upomega\cdot M_{12}+\upomega\cdot M_{20}+\upomega^2\cdot M_{21}+M_{22}\right) \\
M_{12}^\bot  & = & \frac{1}{3}\cdot\sum_{\uplambda,\upmu\in\{0,1,2\}}\upomega^{\tr(\upalpha_{\upmu}+\upalpha_{\uplambda})}\cdot M_{\uplambda,\upmu}\\
&=& \frac{1}{3}\cdot\left(M_{00}+\upomega\cdot M_{01}+\upomega^2\cdot M_{02}+\upomega\cdot M_{10}+
\upomega^2\cdot M_{11}+M_{12}+\upomega^2\cdot M_{20}+M_{21}+\upomega\cdot M_{22}\right) \\
M_{20}^\bot  & = & \frac{1}{3}\cdot\sum_{\uplambda,\upmu\in\{0,1,2\}}\upomega^{\tr(2\cdot \upalpha_{\upmu})}\cdot M_{\uplambda,\upmu}\\
&=& \frac{1}{3}\cdot\left(M_{00}+\upomega^2\cdot M_{01}+\upomega\cdot M_{02}+M_{10}+
\upomega^2\cdot M_{11}+\upomega\cdot M_{12}+M_{20}+\upomega^2\cdot M_{21}+\upomega\cdot M_{22}\right) \\
M_{21}^\bot  & = & \frac{1}{3}\cdot\sum_{\uplambda,\upmu\in\{0,1,2\}}\upomega^{\tr(2\cdot \upalpha_{\upmu}-\upalpha_{\uplambda})}\cdot M_{\uplambda,\upmu}\\
&=& \frac{1}{3}\cdot\left(M_{00}+\upomega^2\cdot M_{01}+\upomega\cdot M_{02}+\upomega^2\cdot M_{10}+
\upomega\cdot M_{11}+M_{12}+\upomega\cdot M_{20}+M_{21}+\upomega^2\cdot M_{22}\right) \\
M_{22}^\bot  & = & \frac{1}{3}\cdot\sum_{\uplambda,\upmu\in\{0,1,2\}}\upomega^{\tr(\upalpha_{\uplambda}-\upalpha_{\upmu})}\cdot M_{\uplambda,\upmu}\\
&=& \frac{1}{3}\cdot\left(M_{00}+\upomega^2\cdot M_{01}+\upomega\cdot M_{02}+\upomega\cdot M_{10}+
M_{11}+\upomega^2\cdot M_{12}+\upomega^2\cdot M_{20}+\upomega\cdot M_{21}+M_{22}\right).
\end{eqnarray*}

Consider the following $9\times 9$ matrix:
$$\upsigma:=\frac{1}{3}\cdot
\begin{pmatrix}
 1& 1&1&1&1&1&1&1& 1  \\
  1&1 &1&\upomega&\upomega&\upomega&\upomega^2&\upomega^2& \upomega^2  \\
  1& 1&1&\upomega^2&\upomega^2&\upomega^2&\upomega&\upomega& \upomega  \\
 1&\upomega^2 &\upomega&1&\upomega^2&\upomega&1&\upomega^2&\upomega   \\
1 &\upomega^2 &\upomega&\upomega&1&\upomega^2&\upomega^2&\upomega&1   \\ 
  1 &\upomega^2 &\upomega&\upomega^2&\upomega&1&\upomega&1&\upomega^2   \\
  1 &\upomega &\upomega^2&1&\upomega&\upomega^2&1&\upomega& \upomega^2  \\
  1 &\upomega &\upomega^2&\upomega&\upomega^2&1&\upomega^2&1& \upomega  \\
1 &\upomega &\upomega^2&\upomega^2&1&\upomega&\upomega&\upomega^2& 1  \\ 
\end{pmatrix}$$
and $\updelta:=(\upsigma^{-1})^t$. Note that $\updelta^2=\upsigma^2=I_9$. Thus the standard 
$9$-dimensional representation $V$ of the group $G:=\ideal{\updelta}$ is a faithful representation of $S_2$ over $\C$.
It follows from (\ref{eq5.1}) that
\begin{equation}
\label{ }
D(M_{\uplambda,\upmu})\in\C[V]^G=\C[M_{\uplambda,\upmu}\mid \uplambda,\upmu\in\{0,1,2\}]^G.
\end{equation}

Note that $\upomega^2+\upomega+1=0$ and a direct computation shows that the matrix 
$$T:=\begin{pmatrix}
 1&-\upomega-1&\upomega&1&-\upomega-1&\upomega&4&-\upomega-1& \upomega   \\
  1&\upomega&-\upomega-1&-\upomega-1&4&\upomega&\upomega&-\upomega-1&1    \\
  1&1&4&\upomega&\upomega&\upomega&-\upomega-1&-\upomega-1&-\upomega-1    \\
   \frac{4}{3} &\frac{1}{3}&\frac{1}{3}&\frac{1}{3}&\frac{1}{3}&\frac{1}{3}&\frac{1}{3}&\frac{1}{3}&\frac{1}{3}    \\
   1&-\upomega-1&\upomega&-\upomega-1&\upomega&1&\upomega&4&-\upomega-1    \\
   1&-\upomega-1&\upomega&\upomega&1&-\upomega-1&-\upomega-1&\upomega&4    \\  
   1&1&-2&\upomega&\upomega&\upomega&-\upomega-1&-\upomega-1&-\upomega-1    \\
    -\frac{2}{3}&\frac{1}{3}&\frac{1}{3}&\frac{1}{3}&\frac{1}{3}&\frac{1}{3}&\frac{1}{3}&\frac{1}{3}&\frac{1}{3}    \\
    1&\upomega&-\upomega-1&-\upomega-1&-2&\upomega&\upomega&-\upomega-1&1    \\
\end{pmatrix}$$
makes
$$\updelta=T^{-1}\cdot \uptau\cdot T$$
holds, where $\uptau:=\dia\{1,1,1,1,1,1,-1,-1,-1\}$. It is not difficult to see that
$$\C[M_{\uplambda,\upmu}\mid \uplambda,\upmu\in\{0,1,2\}]^{\ideal{\uptau}}$$
can be  minimally generated by
$$\A:=\left\{M_{\uplambda,\upmu}, M_{20}^2, M_{21}^2,M_{22}^2, M_{20}M_{21}, M_{20}M_{22}, M_{21}M_{22}\mid \uplambda\in\{0,1\}, \upmu\in\{0,1,2\}\right\}$$
which is isomorphic to the following tensor product 
$$\C[M_{\uplambda,\upmu}\mid \uplambda\in\{0,1\}, \upmu\in\{0,1,2\}]\otimes_\C \C[M_{20}^2, M_{21}^2,M_{22}^2, M_{20}M_{21}, M_{20}M_{22}, M_{21}M_{22}].$$
Note that latter component of this tensor product is isomorphic to the second Veronese subring of 
the polynomial algebra in three variables.

Define a vector $W :=(M_{00}, M_{01}, M_{02}, M_{10}, M_{11}, M_{12},M_{20}, M_{21}, M_{22})$ and for $i\in \{1,2,\dots,9\}$, we use $f_i$ to denote the dot product of the $i$-th row of  $T$ and $W^t$. By  Algorithm \ref{algm}, we may use the matrix $T$ defined above to transfer the $12$ elements in $\A$ to a homogeneous generating set  of $\C[V]^G$:
\begin{equation}
\label{ }
\B:=\{f_i, g_j,h_r\mid 1\leqslant i\leqslant 6, 1\leqslant j,r\leqslant 3\},
\end{equation} 
where $g_j:=(f_{6+j})^2$, $h_1:=f_7f_8, h_2:=f_7f_9$, and $h_3:=f_8f_9$. Therefore, 

\begin{thm}\label{thm5.2}
If $D(M)$ is the complete  weight enumerator of a formally self-dual quantum code $Q$ of the type $((n,K))_3$, then $D(M)$ is a polynomial in elements of $\B$. 
\end{thm}

\vspace{2mm}
\noindent \textbf{Acknowledgements}. 
This research was partially supported by NNSF of China (Grant No. 12561003).


\begin{bibdiv}
  \begin{biblist}
  
  \bib{AL99}{article}{
   author={Ashikhmin, Alexei},
   author={Litsyn, Simon},
   title={Upper bounds on the size of quantum codes},
   journal={IEEE Trans. Inform. Theory},
   volume={45},
   date={1999},
   number={4},
   pages={1206--1215},
}

\bib{AK01}{article}{
   author={Ashikhmin, Alexei},
   author={Knill, Emanuel},
   title={Nonbinary quantum stabilizer codes},
   journal={IEEE Trans. Inform. Theory},
   volume={47},
   date={2001},
   number={7},
   pages={3065--3072},
}

\bib{BCH23}{article}{
   author={Ball, Simeon},
   author={Centelles, Aina},
   author={Huber, Felix},
   title={Quantum error-correcting codes and their geometries},
   journal={Ann. Inst. Henri Poincar\'{e} D},
   volume={10},
   date={2023},
   number={2},
   pages={337--405},
}

  \bib{CW11}{book}{
   author={Campbell, H. Eddy A.},
   author={Wehlau, David L.},
   title={Modular invariant theory},
   series={Encyclopaedia of Mathematical Sciences},
   volume={139},
   publisher={Springer-Verlag, Berlin},
   date={2011},
}

\bib{Che14}{article}{
   author={Chen, Yin},
   title={On modular invariants of a vector and a covector},
   journal={Manuscripta Math.},
   volume={144},
   date={2014},
   number={3-4},
   pages={341--348},
   issn={0025-2611},
}

\bib{Che18}{article}{
   author={Chen, Yin},
   title={Vector invariants for two-dimensional orthogonal groups over
   finite fields},
   journal={Monatsh. Math.},
   volume={187},
   date={2018},
   number={3},
   pages={479--497},
}

\bib{Che21}{article}{
   author={Chen, Yin},
   title={Relations between modular invariants of a vector and a covector in
   dimension two},
   journal={Canad. Math. Bull.},
   volume={64},
   date={2021},
   number={4},
   pages={820--827},
}
 
 \bib{CR26}{article}{
   author={Chen, Yin},
   author={Ren, Shan},
   title={Modular matrix invariants under some transpose actions},
   journal={Finite Fields Appl.},
   volume={113},
   date={2026},
   pages={Paper No. 102824, 14},
} 
  
    \bib{CZ24}{article}{
   author={Chen, Yin},
   author={Zhang, Runxuan},
   title={Shape enumerators of self-dual NRT codes over finite fields},
   journal={SIAM J. Discrete Math.},
  volume={38},
   date={2024},
   number={4},
   pages={2842--2854},
}

\bib{CS25}{article}{
   author={Cheng, Chuangxun},
   author={Shang, Xiaoguang},
   title={Three MacWilliams type identities and quantum error-correcting
   $G$-codes},
   journal={Adv. Math. Commun.},
   volume={19},
   date={2025},
   number={5},
   pages={1320--1335},
}

\bib{DK15}{book}{
   author={Derksen, Harm},
   author={Kemper, Gregor},
   title={Computational invariant theory},
   series={Encyclopaedia of Mathematical Sciences},
   volume={130},
   edition={Second enlarged edition},
   publisher={Springer, Heidelberg},
   date={2015},
}

\bib{Gle71}{article}{
   author={Gleason, Andrew M.},
   title={Weight polynomials of self-dual codes and the MacWilliams
   identities},
   conference={
      title={Actes du Congr\`es International des Math\'{e}maticiens},
      address={Nice},
      date={1970},
   },
   book={
      publisher={Gauthier-Villars \'{E}diteur, Paris},
   },
   date={1971},
   pages={211--215},
}

  \bib{HYY19}{article}{
   author={Hu, Chuangqiang},
   author={Yang, Shudi},
   author={Yau, Stephen S.-T.},
   title={Complete weight distribution and MacWilliams identities for asymmetric quantum codes},
   journal={IEEE Access},
   volume={7},
   date={2019},
   pages={68404--68414},
}
  
\bib{HYY20}{article}{
   author={Hu, Chuangqiang},
   author={Yang, Shudi},
   author={Yau, Stephen S.-T.},
   title={Weight enumerators for nonbinary asymmetric quantum codes and
   their applications},
   journal={Adv. in Appl. Math.},
   volume={121},
   date={2020},
   pages={102085, 29},
}

\bib{HESG18}{article}{
   author={Huber, Felix},
   author={Eltschka, Christopher},
   author={Siewert, Jens},
   author={G\"{u}hne, Otfried},
   title={Bounds on absolutely maximally entangled states from shadow
   inequalities, and the quantum MacWilliams identity},
   journal={J. Phys. A},
   volume={51},
   date={2018},
   number={17},
   pages={175301, 22},
}

\bib{Kem96}{article}{
   author={Kemper, Gregor},
   title={Calculating invariant rings of finite groups over arbitrary
   fields},
   journal={J. Symbolic Comput.},
   volume={21},
   date={1996},
   number={3},
   pages={351--366},
}

\bib{KKKS06}{article}{
   author={Ketkar, Avanti},
   author={Klappenecker, Andreas},
   author={Kumar, Santosh},
   author={Sarvepalli, Pradeep K.},
   title={Nonbinary stabilizer codes over finite fields},
   journal={IEEE Trans. Inform. Theory},
   volume={52},
   date={2006},
   number={11},
   pages={4892--4914},
}

\bib{KL97}{article}{
   author={Knill, Emanuel},
   author={Laflamme, Raymond},
   title={Theory of quantum error-correcting codes},
   journal={Phys. Rev. A (3)},
   volume={55},
   date={1997},
   number={2},
   pages={900--911},
}

\bib{KL25}{article}{
   author={Kukliansky, Alon},
   author={Lackey, Brad},
   title={Quantum circuit tensors and enumerators with applications to
   quantum fault tolerance},
   journal={IEEE Trans. Inform. Theory},
   volume={71},
   date={2025},
   number={6},
   pages={4406--4427},
}

\bib{LHL16}{article}{
   author={Lai, Ching-Yi},
   author={Hsieh, Min-Hsiu},
   author={Lu, Hsiao-feng},
   title={On the MacWilliams identity for classical and quantum convolutional codes},
   journal={IEEE Trans. Commun.},
   volume={64},
   date={2016},
   number={8},
   pages={3148--3159},
}

\bib{NRS06}{book}{
   author={Nebe, Gabriele},
   author={Rains, Eric M.},
   author={Sloane, Neil J. A.},
   title={Self-dual codes and invariant theory},
   series={Algorithms and Computation in Mathematics},
   volume={17},
   publisher={Springer-Verlag, Berlin},
   date={2006},
   pages={xxviii+430},
   isbn={978-3-540-30729-7},
   isbn={3-540-30729-X},
}

\bib{Rai00}{article}{
   author={Rains, Eric M.},
   title={Polynomial invariants of quantum codes},
   journal={IEEE Trans. Inform. Theory},
   volume={46},
   date={2000},
   number={1},
   pages={54--59},
}

\bib{Ren24}{article}{
   author={Ren, Shan},
   title={Modular invariants of a vector and a covector for some elementary
   abelian $p$-groups},
   journal={Comm. Algebra},
   volume={52},
   date={2024},
   number={11},
   pages={4914--4922},
   issn={0092-7872},
}

\bib{SA20}{article}{
   author={Santos, Welington},
   author={Alves, Marcelo Muniz S.},
   title={Polynomial invariant theory and shape enumerator of self-dual
   codes in the NRT-metric},
   journal={IEEE Trans. Inform. Theory},
   volume={66},
   date={2020},
   number={7},
   pages={4061--4074},
   issn={0018-9448},
}

\bib{SH95}{article}{
   author={Shiokawa, K.},
   author={Hu, Bei-Lok L.},
   title={Decoherence, delocalization, and irreversibility in quantum
   chaotic systems},
   journal={Phys. Rev. E (3)},
   volume={52},
   date={1995},
   number={3},
   pages={2497--2509},
   issn={1539-3755},
}

\bib{SL97}{article}{
   author={Shor, Peter},
   author={Laflamme, Raymond},
   title={Quantum analog of the MacWilliams identities in classical coding theory},
   journal={Phys. Rev. Lett.},
   volume={78},
   date={1997},
   number={8},
   pages={1600--1602},
}

\bib{Slo77}{article}{
   author={Sloane, Neil J. A.},
   title={Error-correcting codes and invariant theory: new applications of a
   nineteenth-century technique},
   journal={Amer. Math. Monthly},
   volume={84},
   date={1977},
   number={2},
   pages={82--107},
   issn={0002-9890},
}

\bib{Ste96a}{article}{
   author={Steane, Andrew M.},
   title={Error correcting codes in quantum theory},
   journal={Phys. Rev. Lett.},
   volume={77},
   date={1996},
   number={5},
   pages={793--797},
}

\bib{Ste96b}{article}{
   author={Steane, Andrew M.},
   title={Multiple-particle interference and quantum error correction},
   journal={Proc. Roy. Soc. London Ser. A},
   volume={452},
   date={1996},
   number={1954},
   pages={2551--2577},
}

\bib{Wan12}{book}{
   author={Wan, Zhexian},
   title={Finite fields and Galois rings},
   publisher={World Scientific Publishing Co. Pte. Ltd., Hackensack, NJ},
   date={2012},
   pages={x+376},
   isbn={978-981-4366-34-2},
   isbn={981-4366-34-X},
}

  \end{biblist}
\end{bibdiv}
\raggedright
\end{document}